\newtheorem{theorem}{Theorem}
\newtheorem{proposition}{Proposition}
\newtheorem{lemma}{Lemma}
\newtheorem{corollary}{Corollary}
\newtheorem{observation}{Observation}
\newcommand{\N}{\ensuremath{\mathcal{N}_{\ell}}\xspace}
\newcommand{\PP}{\ensuremath{\mathcal{P}_{\ell}}\xspace}
\newcommand{\diam}{\ensuremath{\rm diam}\xspace}
\DeclareMathOperator*{\argmax}{arg\,max}
\newcommand\myparagraph[1]{\vspace{6pt}\noindent \textbf{#1}.}
\title{Computing optimal shortcuts for networks}
\author{\normalfont\fontsize{12}{14}\selectfont
Delia Garijo$^1$\and
Alberto M\'arquez$^1$\and
Natalia Rodr{\'i}guez$^2$\and
Rodrigo I. Silveira$^3$
}
\date{}
\begin{document}

\maketitle

\addtocounter{footnote}{1}  \footnotetext{Departamento de Matem\'atica Aplicada I, Universidad de Sevilla, Avda. Reina Mercedes s/n, 41012, Sevilla, Spain. Emails: \{dgarijo, almar\}@us.es}
\addtocounter{footnote}{1} \footnotetext{Departamento de Computaci\'on, Universidad de Buenos Aires, Pabellón I, Ciudad Universitaria, C1428EGA, Buenos Aires, Argentina. Email: nrodriguez@dc.uba.ar}
\addtocounter{footnote}{1} \footnotetext{\textbf{Corresponding author.} Departament de Matem\`{a}tiques, Universitat Polit\`{e}cnica de Catalunya, Jordi Girona 1-3, 08034, Barcelona, 
Spain. Email: rodrigo.silveira@upc.edu}

\begin{abstract}
We study augmenting a plane Euclidean network with a segment, called a \emph{shortcut}, to minimize the largest distance between any two points along the edges of the resulting network. Problems of this type have received considerable attention recently, mostly for discrete variants of the problem.
We consider a fully continuous setting, where the problem of computing distances and placing a shortcut is much harder as all points on the network, instead of only the vertices, must be taken into account. We present the first results on the computation of optimal shortcuts for general networks in this model: a polynomial time algorithm and a discretization of the problem that leads to an approximation algorithm. We also improve the general method for networks that are paths, restricted to two types of shortcuts: those with a fixed orientation and simple shortcuts.

\end{abstract}




\section{Introduction}

A fundamental task in network analysis, especially in the context of geographic data (for instance, for networks that model roads, rivers, or train tracks), is analyzing how an existing network can be improved.
This can arise in many different contexts: in relation to facility location analysis, for example, to guarantee a certain maximum travel time from any point on the network to the nearest hospital, or in road network design problems, to decide where to add road segments to reduce network congestion~\cite{YangBell98}.

In fact, due to the many and varied applications of this type of network optimization problem, there is a vast amount of work in areas like operational research and  transportation science, which deals with different aspects of the so-called Network Design Problem: deciding how to add or modify network connections (e.g., roadway segments) in order to improve aspects such as travel time, capacity or connectivity (see, for instance \cite{YangBell98,FEREMANS20031,FARAHANI2013281}).

One of the simplest approaches to model networks like the ones above is as a \emph{geometric network}: an undirected graph whose vertices are points in $\mathbb{R}^2$ and whose edges are straight-line segments connecting pairs of points. Moreover, in many applications, it is reasonable to assign lengths to the edges equal to the Euclidean distance between their endpoints.
These are called \emph{Euclidean networks}.
When, in addition, there are no crossings between edges, the Euclidean network is said to be \emph{plane}.
Many problems in geographic analysis, for instance, those involving certain transportation networks, can be reasonably  modeled with a plane Euclidean network.
In the following, we shall simply write \emph{network}, it being understood as plane and Euclidean.

One of the most fundamental ways to improve a network is by adding edges.
This increases the connectivity of the network and potentially can decrease travel times and congestion.
The most studied criteria to measure network improvement, in the geometric setting, are related to distances.
Particularly important is the maximum distance, or \emph{diameter} of the network, which provides an upper bound on the distance between any two network points.
Another important distance-related criterion in this context is the \emph{dilation}, which captures the maximum detour between two points on the network.

In this paper, we focus on the problem of adding segments to a network in order to improve its diameter.
This can be seen as a variant of the Diameter-Optimal-$k$-Augmentation problem for edge-weighted geometric graphs, which consists in inserting $k$ additional edges into an edge-weighted geometric graph, while minimizing the largest distance in the resulting graph.
While in most augmentation problems the additional edges are inserted between two existing vertices of the graph, we consider a continuous version,  for $k=1$: the endpoints of our inserted segment, called \emph{shortcut}, can be any two points (not necessarily vertices) on the network.  In addition, our goal is to find an \emph{optimal} shortcut: one minimizing the diameter of the resulting network, over all possible shortcuts. 
We point out here that a segment will be considered a shortcut only if its insertion improves the diameter of the resulting network. 
The fact that all points on the network must be taken into account turn the computation of distances and the placement of a shortcut into a much harder problem.
For instance, since the resulting network also includes the points on the shortcut inserted, we can run into the somewhat counterintuitive situation that the insertion of a segment  worsens the diameter of the resulting network.

In the continuous setting, two major variants of the problem arise, depending on how the shortcut is inserted into the network.
In the first variant, which we call \emph{highway model}, the crossings between the shortcut and the network edges do not form new network vertices: a path can only enter and leave the shortcut through its endpoints.
In contrast, in the \emph{planar model}, every crossing creates a new vertex, which can be used by paths in the network.
Figure~\ref{fig:highway} illustrates the difference between the two models.

\begin{figure}[ht]
\begin{center}
\includegraphics{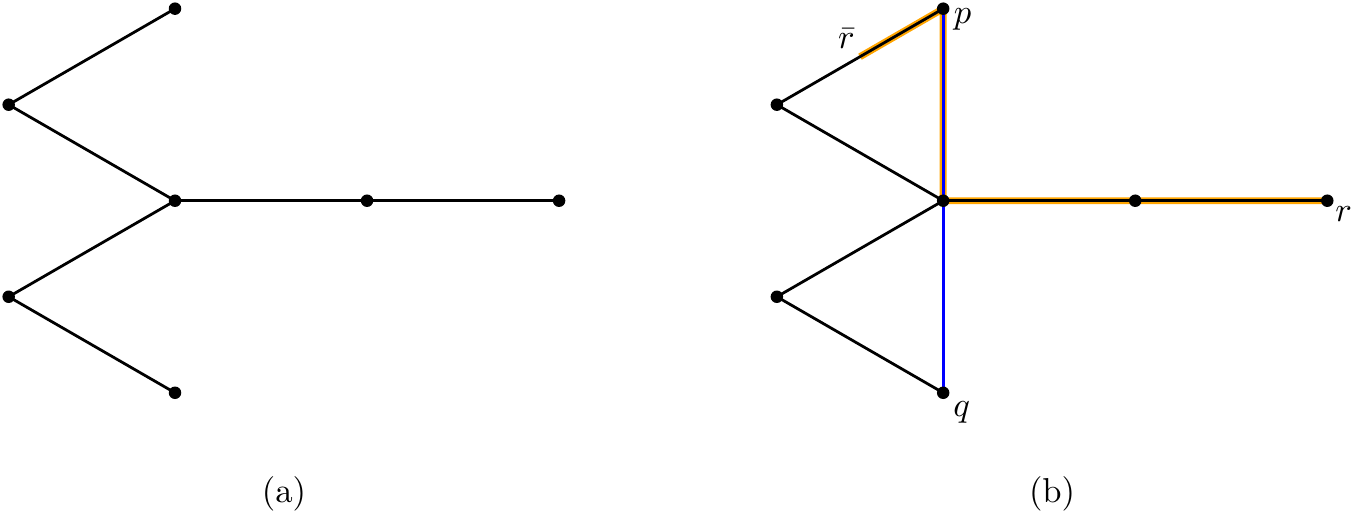}
\end{center}
\caption{Example of network with diameter 4 (each edge has unit length): (a) in the highway model, no single segment insertion can improve the diameter, (b) in the planar model, segment $pq$ is a shortcut, since its insertion reduces the diameter to $d(r,\bar{r})=3.5$.}\label{fig:highway}
\end{figure}

We deal in this paper with optimal shortcuts in the planar model. This model is more general, and is applicable to a wider range of situations, like the addition of segments to road or pedestrian networks.
From a theoretical point of view, the difference between the highway and planar model is important.
The latter results in more complex problems, since the fact that a shortcut can be used only in part, implies that the structural information on how the distances in the network change after adding a segment is more difficult to maintain.
Moreover, as we show in this work, many intuitive properties of shortcuts do not hold in the planar model anymore.

\myparagraph{Related work}

The motivation of this work lies on the well-studied Road Network Design Problem, which has been subject of much attention in many areas, including operational research, transport geography, and engineering, and has been recognized as one of the most difficult and challenging problems in transport~\cite{YangBell98}.
We refer to~\cite{YangBell98,FEREMANS20031,FARAHANI2013281} for relevant surveys on this general problem, for their urban and road variants.
The network design problem is extremely complex, involving plenty of different aspects such as travel demand, road capacity, traffic flow, and congestion.
In this work we focus on a much simplified geometric scenario, with the goal of exploring some of the algorithmic aspects of this complex problem. 
For that reason, our problem is mostly related graph augmentation problems, and in particular to those where the graph is geometric.

There has been a considerable amount of work devoted to the non-geometric version of the Optimal-$k$-Augmentation problem, see for instance \cite{FGGM14,GGKSS2015}. However, there are very few results on augmentation problems over plane geometric graphs. Farshi et al.~\cite{FGG05} considered plane Euclidean networks in $\mathbb{R}^d$ and designed approximation algorithms to minimize the dilation when inserting one edge. The analogous problem for networks embedded in a metric space is studied in \cite{LW08,W10}. See the survey~\cite{ht-13} for more on augmentation problems over plane geometric graphs.

In the geometric and continuous setting, most known results on the problem studied in this paper consider the highway model and certain classes of graphs. For paths, De Carufel et al.~\cite{DBLP:conf/swat/CarufelMS16} gave an algorithm to find an optimal shortcut in linear time, and also optimal pairs of shortcuts (i.e., $k=2$) for convex cycles. Trees have been studied in a recent follow-up work~\cite{CGSS-17}, which presents an algorithm to find an optimal shortcut for a tree of size $n$ in $O(n\log n)$ time. For circles, very recently Bae et al.~\cite{BaeBCGL17} have analyzed how to add up to seven shortcuts in an optimal way.

For the planar model much lees is known. Yang~\cite{yang} designed three different approximation algorithms to compute an \emph{optimal} shortcut for certain types of paths. C\'aceres et al.~\cite{CGGMPR17} were the first to deal with general networks, for which they show that one can find a shortcut in polynomial time if one exists, but they do not compute an optimal one. Note that there are networks whose diameter cannot be improved by adding only one segment (e.g., a cycle).

\myparagraph{Our results}
We present the first study of optimal shortcuts in the planar model for general networks, and several improved results for paths.
An important contribution of our work is to highlight many important differences between the highway and planar models, the latter resulting in considerably harder problems.
In Section \ref{general_networks}, we give a polynomial time algorithm to compute an optimal shortcut, if one exists, for a general network.
Moreover, we present a discretization of the problem that immediately leads to an approximation algorithm, generalizing an existing result for paths~\cite{yang}.
Section \ref{sec:path} focuses on paths: we first show that the diameter of a path network after adding a shortcut can be computed in $\Theta(n)$ time.
Then we improve the general method of Section \ref{general_networks} for shortcuts of any fixed direction.
Finally, we study simple shortcuts, a variant that has been considered before \cite{CGGMPR17,yang}, which has applications in settings where the added edge cannot intersect the existing network. We show that an optimal simple shortcut can be found, if one exists, in $O(n^2)$ time.

\subsection{Preliminaries}

We will use $\mathcal{N}=(V(\mathcal{N}), E(\mathcal{N}))$ to denote a network with $n$ vertices, and  $\N$ for its \emph{locus}, the set of all points of the Euclidean plane that are on $\mathcal{N}$. Thus,
$\N$ is treated indistinctly as a network or as a closed point set. When $\N$ is a path, we use $\PP$ instead of $\N$. Further, we write $a\in \N$ for a point $a$ on $\N$, and $V(\mathcal{N})\subset \N$.

A \emph{path} $P$ between two points $a,b$ on $\N$ is a sequence $au_1\dots u_kb$ where $u_1u_2,\dots,u_{k-1}u_k\in E(\mathcal{N})$, $a$ is a point on an edge ($\neq u_1u_2$) incident to $u_1$, and $b$ is a point on an edge ($\neq u_{k-1}u_k$) incident to $u_k$. We use $|P|$ to denote the \emph{length} of $P$, i.e., the sum of the lengths of all edges $u_iu_{i+1}$ plus the lengths of the segments $au_1$ and $bu_k$.
The length of a shortest path from $a$ to $b$ is the \emph{distance} between $a$ and $b$ on ${N}_{\ell}$. This distance is written as $d_{{N}_{\ell}}(a,b)$ or $d(a,b)$ when the network is clear, and whenever $ab\notin E({N}_{\ell})$, it is larger than  $|ab|$, the Euclidean distance between the points.

The \emph{eccentricity} of a point $a\in \N$ is ecc$(a)=\max_{b \in \N} d(a,b)$, and the {\em diameter} of $\N$, also called \emph{generalized} or \emph{continuous diameter} of $\mathcal{N}$ \cite{CG82,DBLP:conf/swat/CarufelMS16,CGSS-17}, is $ {\rm diam}(\N) = \max_{a\in \N} {\rm ecc}(a).$ Two points $a,b \in \N$ are {\em diametral} whenever $d(a,b)={\rm diam}(\N)$, and a shortest path connecting $a$ and $b$ is then called \emph{diametral path}.

The diameter of \N, ${\rm diam}(\N)$, can be computed in quadratic time \cite{CG82,CGGMPR17}.
Furthermore, the diametral pairs of $\N$ are either (i) two vertices, (ii) two points on distinct non-pendant edges\footnote{An edge $uv\in E(\mathcal{N})$ is {\em pendant} if either $u$ or $v$ is a {\em pendant} vertex (i.e., has degree 1).}, or (iii) a pendant vertex and a point on a non-pendant edge~\cite[Lemma 6]{CGGMPR17}. Thus, with some abuse of notation, in Section \ref{general_networks}, we will say that a {\em diametral pair} $\alpha,\beta\in V(\mathcal{N})\cup E(\mathcal{N})$ may be (i) vertex-vertex, (ii) edge-edge, or (iii) vertex-edge.

A {\em shortcut} for $\N$ is a segment $s$ with endpoints on $\N$ such that its insertion improves the diameter of the resulting network, that is, $ {\rm diam}(\N \cup s)<{\rm diam}(\N).$
We say that shortcut $s$ is {\em simple} if its two endpoints are the only intersection points with $\N$, and $s$ is \emph{maximal} if it is the intersection of a line and $(\N \cup s)$, i.e., $s=(\N \cup s) \cap \ell$, for some line $\ell$; see Figure \ref{fig:highway}(b) for an example of a non-simple and maximal shortcut. A shortcut is \emph{optimal} if it minimizes ${\rm diam}(\N \cup s) $ among all shortcuts $s$ for $\N$.

\section{General networks}\label{general_networks}

The main result in \cite{CGGMPR17} states that one can always determine in polynomial time whether a network $\N$ has a shortcut (and compute one, in case of existence).
In this section, we first prove the analogous result for optimal shortcuts. Our proof uses some ideas in \cite{CGGMPR17} but captures the property of being optimal with a much shorter argument based on some functions defined in Lemma \ref{lem:ecc_func} below.

Let $\alpha,\beta\in V(\mathcal{N})\cup E(\mathcal{N})$, and let $e=uv$ and $e'=u'v'$ be two edges of $\mathcal{N}$. When $\alpha$ is an edge, we use ${\rm ecc}(u,\alpha)$ to indicate the maximum distance from $u$ to the points on $\alpha$ (analogous for $\beta$ and the remaining endpoints of $e$ and $e'$); if $\alpha$ is a vertex, ${\rm ecc}(u,\alpha)=d(u,\alpha)$. In general, ${\rm ecc}(\alpha,\beta)={\rm max}_{t\in \alpha, z\in \beta} d(t,z)$.

\begin{lemma}
\label{lem:ecc_func}
Let $ y=ax+b$ be a line intersecting edges $e=uv$ and $e'=u'v'$ on points $p$ and $q$, respectively, and let $\alpha,\beta\in V(\mathcal{N})\cup E(\mathcal{N})$. For each pair $(w,z)$ with $w\in \{u,v\}$ and $z\in \{u',v'\}$, function $f_{\alpha,\beta}^{w,z}(a,b)={\rm ecc}(w,\alpha)+|wp|+|pq|+|qz|+{\rm ecc}(z,\beta)$ is linear in $b$.
\end{lemma}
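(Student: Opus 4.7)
The plan is to verify linearity term by term in $b$, with $a$ held fixed. First, ${\rm ecc}(w,\alpha)$ and ${\rm ecc}(z,\beta)$ depend only on the fixed network objects $w,\alpha,z,\beta$ and so contribute constants; it therefore suffices to show that each of $|wp|$, $|pq|$, $|qz|$ is linear in $b$.

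For $|wp|$ I would parametrize $e=uv$ as $u + t(v-u)$ with $t \in [0,1]$ and substitute into $y = ax+b$. This produces a single linear equation in $t$ whose solution $t_p = (au_1+b-u_2)/((v_2-u_2)-a(v_1-u_1))$ is a linear function of $b$. Consequently $|up| = t_p|uv|$ and $|vp| = (1-t_p)|uv|$ are both linear in $b$, so $|wp|$ is linear in $b$ whether $w=u$ or $w=v$. The same argument applied to $e'$ shows $|qz|$ is linear in $b$.

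For $|pq|$ the key observation is that both $p$ and $q$ lie on the common line $y = ax+b$, so $p-q$ is a scalar multiple of $(1,a)$ and $|pq| = |x_p-x_q|\sqrt{1+a^2}$. Since $x_p = u_1 + t_p(v_1-u_1)$ and $x_q = u'_1+t_q(v'_1-u'_1)$ are each linear in $b$ by the previous step, so is $x_p-x_q$. Over any range of $b$ in which $p$ stays on $e$, $q$ stays on $e'$, and the sign of $x_p-x_q$ is preserved, $|pq|$ equals $\pm\sqrt{1+a^2}(x_p-x_q)$ and is linear in $b$. Summing the three linear pieces together with the two constant eccentricities finishes the argument.

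The only delicate point is the potential sign change of $x_p-x_q$ if $b$ varies enough for $p$ and $q$ to swap order along the translating line; within a single combinatorial configuration — the setting implicit in the lemma, where $e$, $e'$, $w$, $z$ are fixed — this cannot occur, and no additional case analysis is needed beyond the one already reflected in the choice of $(w,z)$.
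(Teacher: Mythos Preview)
Your proof is correct and follows the same term-by-term decomposition as the paper: the eccentricities are constants, $|wp|$ and $|qz|$ are linear, and only $|pq|$ needs a real check. The sole difference is cosmetic---where you compute $|pq|=\sqrt{1+a^2}\,|x_p-x_q|$ explicitly in coordinates, the paper simply invokes the Side Splitter Theorem to conclude that a parallel translation of the line changes $|pq|$ linearly.
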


\begin{proof}
 The only term in the expression of $f_{\alpha,\beta}^{w,z}(a,b)$ that is not linear is $|pq|$ (the values ${\rm ecc}(w,\alpha)$ and ${\rm ecc}(z,\beta)$ are fixed, and $|wp|$ and $|qz|$ are linear). Thus, it suffices to prove that there is a linear change when line $y=ax+b$ is moved parallel to itself (changing slightly $b$), which is a straightforward consequence of the Side Splitter Theorem.
\end{proof}

The following theorem is the optimality version of Theorem 8 in \cite{CGGMPR17}.

\begin{theorem}\label{th:polynomial}
It is possible to determine in polynomial time whether a network $\mathcal{N}_{\ell}$ admits an optimal shortcut, and compute one in case of existence.
\end{theorem}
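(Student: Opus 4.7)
The plan is to reduce the problem to enumerating a polynomial number of combinatorial configurations, each yielding a small-dimensional optimization that can be solved via Lemma~\ref{lem:ecc_func}. An optimal shortcut exists if and only if the minimum over all candidate shortcuts of $\mathrm{diam}(\mathcal{N}\cup s)$ is strictly smaller than $\mathrm{diam}(\mathcal{N})$, so it suffices to enumerate all candidates and return the best one.

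First, any shortcut has its two endpoints on some pair of edges $e=uv$, $e'=u'v'$ of $\mathcal{N}$, giving $O(n^2)$ "host pairs". For a fixed host pair, I would parametrize the shortcut by the slope $a$ and intercept $b$ of the line through its endpoints. The diameter of $\mathcal{N}\cup s$ is realized by a diametral pair $(\alpha,\beta)$ of one of the three types recalled in the Preliminaries, and since $\mathcal{N}\cup s$ has $O(n)$ vertices and $O(n)$ edges there are $O(n^2)$ combinatorial candidates for $(\alpha,\beta)$. For each candidate, the maximal distance between points of $\alpha$ and $\beta$ in $\mathcal{N}\cup s$ is controlled by $\min\{d_\mathcal{N}(\alpha,\beta),\min_{w,z}f_{\alpha,\beta}^{w,z}(a,b)\}$, where $(w,z)\in\{u,v\}\times\{u',v'\}$ tracks the vertices through which a path enters and leaves the shortcut. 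By Lemma~\ref{lem:ecc_func}, for fixed $a$ each $f_{\alpha,\beta}^{w,z}$ is linear in $b$.

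Fixing a host pair and a slope $a$, then, $\mathrm{diam}(\mathcal{N}\cup s)$ viewed as a function of $b$ is the max over $O(n^2)$ candidate pairs of a min over four linear functions and a constant, hence a piecewise linear function with polynomially many breakpoints whose minimum is attained at such a breakpoint or at the boundary of the feasible range (where an endpoint of $s$ reaches a vertex of its host edge). To treat the slope $a$ itself as a variable, I would characterize a globally optimal $(a^*,b^*)$ as one at which enough "tight" conditions hold to pin it down: several candidate pairs realize the diameter simultaneously, or the shortcut meets a structural event such as an endpoint coinciding with a vertex or the supporting line passing through a vertex of $\mathcal{N}$. Each such configuration produces a constant-size system of algebraic equations of bounded degree in $(a,b)$, hence $O(1)$ real solutions. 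Enumerating all combinatorial types yields a polynomial-size set of candidate shortcuts, from which the algorithm selects the feasible one minimizing $\mathrm{diam}(\mathcal{N}\cup s)$, or reports that no shortcut exists.

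The main obstacle is controlling the combinatorial complexity introduced by the planar model: as the shortcut moves, it creates and destroys intersection vertices with other edges of $\mathcal{N}$, and shortest paths in $\mathcal{N}\cup s$ may use only part of the shortcut, so the applicable routing structure changes with $(a,b)$. One must carefully enumerate the topological events that partition the $(a,b)$-plane and verify that the induced decomposition has polynomial size. Lemma~\ref{lem:ecc_func} is the key simplification that makes this work: once the combinatorial type is fixed, the remaining subproblem is linear in one variable, which keeps the whole enumeration polynomial and matches the spirit of the shorter argument the authors advertise over the proof in \cite{CGGMPR17}.
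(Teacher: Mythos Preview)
Your outline has the right global shape—partition the parameter space into finitely many cells of fixed combinatorial type and solve a piecewise-linear subproblem in each—but there is a concrete error that you yourself flag as an ``obstacle'' without resolving. When you write that the maximal distance between $\alpha$ and $\beta$ is governed by $\min_{w,z} f_{\alpha,\beta}^{w,z}(a,b)$ with $(w,z)\in\{u,v\}\times\{u',v'\}$, you are implicitly assuming that a shortest path using the shortcut must traverse it end-to-end from $p$ to $q$. That is the \emph{highway} model. In the planar model, each crossing of $s$ with an interior edge $e_i=u_iv_i$ creates a new vertex $p_i$, and a shortest path may enter the shortcut at some $p_i$ and leave at some $p_j$; consequently $(w,z)$ must range over the endpoints of \emph{all} crossed edges, not just those of the host pair $(e,e')$. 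With only four choices of $(w,z)$ your diameter function is simply wrong for the model under consideration, and the optimum you compute need not be the optimum of $\mathrm{diam}(\mathcal N_\ell\cup s)$.

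The paper addresses precisely this by first decomposing the $(a,b)$-space so that the set $I=\{e=e_0,\dots,e_{k+1}=e'\}$ of crossed edges is fixed: two lines are declared equivalent if they leave the same vertex set of $\mathcal N_\ell$ on each side, yielding $O(n^2)$ equivalence classes and, together with the choice of host pair, $O(n^4)$ hourglass-shaped regions $\mathcal P_{e,e'}(m)$. Within a region the crossed edges do not change, so one can enumerate \emph{all} routes $(w,z)$ with $w\in\{u_i,v_i\}$, $z\in\{u_j,v_j\}$, $0\le i\neq j\le k+1$ (now $O(n^2)$ of them rather than four) and apply Lemma~\ref{lem:ecc_func} to each; the optimal shortcut in the region is then read off from the upper envelope of these linear-in-$b$ functions. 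This explicit decomposition is also what handles the slope: it is the concrete enumeration of ``topological events'' in both $a$ and $b$ that your last paragraph alludes to but does not carry out. Without it, your claim that the candidate set has polynomial size remains unsupported, and the argument characterizing $(a^*,b^*)$ by ``enough tight conditions'' is too vague to stand in for it.
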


\begin{proof}
As explained in~\cite[Proposition 7]{CGGMPR17}, the search space can be split into a polynomial number of regions; we include a description here for the sake of completeness.

Two lines are \emph{equivalent} if the half-planes to the right (resp., to the left) that they define contain the same vertices of $\N$.
There are $O(n^2)$ classes of equivalent lines as one can associate four lines to every pair of vertices $u,v$, say,
lines $m_{u^+v^+}$ and $m_{u^-v^-}$ parallel to segment $uv$ and leaving $u$ and $v$ in the same half-plane, and lines $m_{u^+v^-}$, $m_{u^-v^+}$ leaving $u,v$ in different half-planes. These lines must be placed sufficiently close to $u$ and $v$ as Figure \ref{fig:th}(a) shows. Thus, every class of equivalent lines has a representative in the set $\{m_{u^+v^+}, m_{u^-v^-}, m_{u^+v^-}$, $m_{u^-v^+} \, | \, u,v\in V(\mathcal{N})\}$, whose cardinality is $O(n^2)$.

Given a line $ m$ that crosses two fixed edges $e,e' \in E(\mathcal{N})$, let  ${\cal P}_{e,e'}(m)$ be the set of  lines  equivalent to $m$. This set can be seen as the set of segments determined by the corresponding intersections of the lines in ${\cal P}_{e,e'}(m)$ with edges $e$ and $e'$; thus, the region of the plane that ${\cal P}_{e,e'}(m)$ defines has the shape of an hourglass~\cite[Section 3.1]{CG89}, see Figure \ref{fig:th}(b). By the argument above, there are  $O(n^2)$ regions ${\cal P}_{e,e'}(m)$ per each pair of edges $e,e'\in E(\mathcal{N})$, which gives a total of $O(n^4)$ regions.

\begin{figure}[ht]
\begin{center}
\begin{tabular}{ccccccc}
\includegraphics[width=0.35\textwidth]{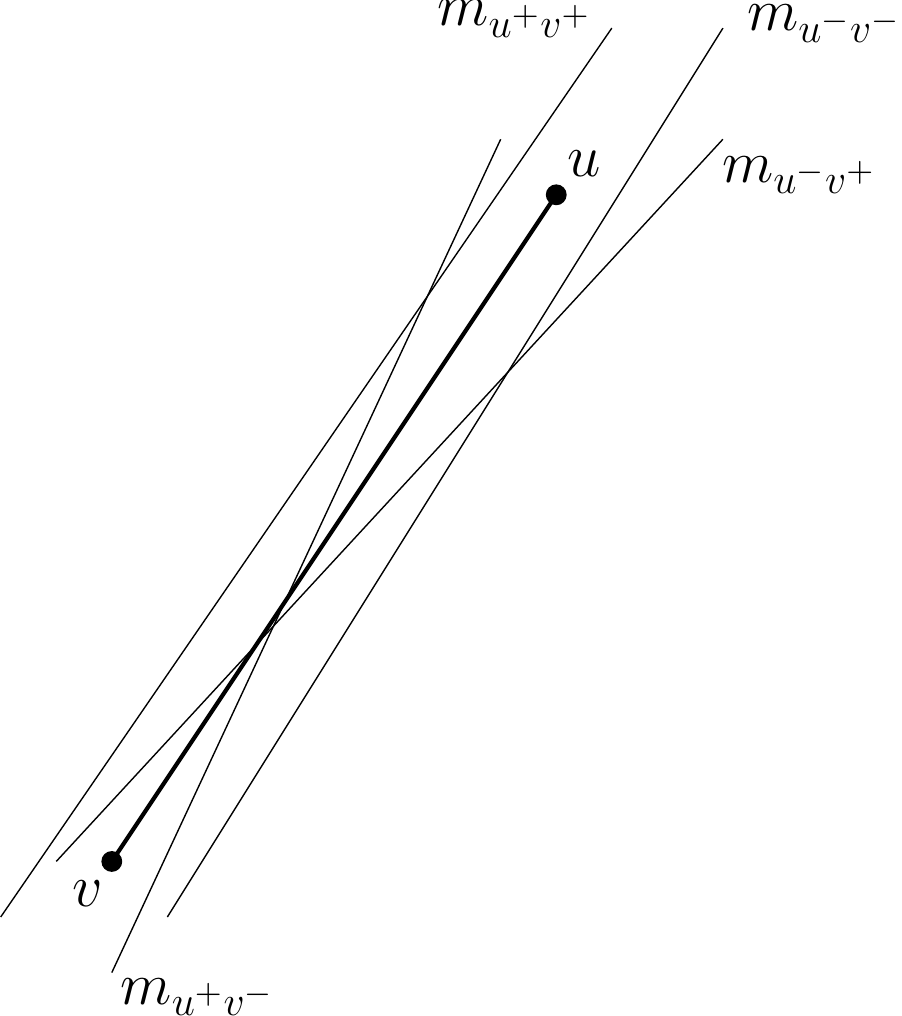}
&  &  &  & & &
\includegraphics[width=0.30\textwidth]{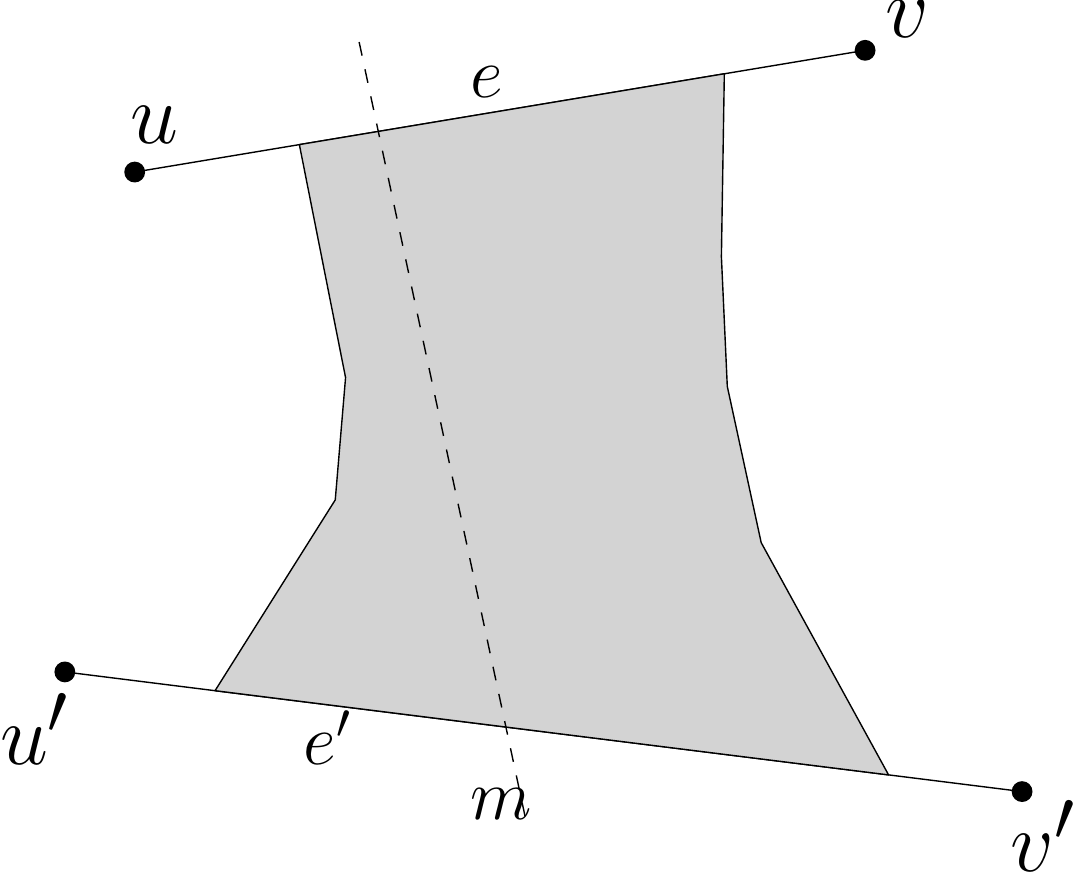}\\
(a)& & & & & & (b)
\end{tabular}
\end{center}
\caption{(a) Lines associated to vertices $u$ and $v$. (b) region ${\cal P}_{e,e'}(m)$.}\label{fig:th}
\end{figure}

 Consider a line $r \equiv y=ax+b$ in a region ${\cal P}_{e,e'}(m)$ and the set $I=\{e=e_0, e_1, \ldots, e_{k}, e_{k+1}=e'\}$ of edges that it intersects in between $e$ and $e'$; let $e_i=u_iv_i$ and $p_i=r\cap e_i$. For a fixed diametral pair $\alpha, \beta\in V(\mathcal{N})\cup E(\mathcal{N})$, a function of the type $f_{\alpha,\beta}^{w,z}(a,b)$ (see Lemma \ref{lem:ecc_func}) with $w\in \{u_i,v_i\}$ and $z\in \{u_j,v_j\}$, $0\leq i\neq j\leq k+1$, describes ${\rm ecc}(\alpha,\beta)$ when using a path that passes through vertices $w, z$ and contains segment $p_ip_j$.

Geometrically, functions $f_{\alpha,\beta}^{w,z}(a,b)$ can be viewed as lines in the $(b, f_{\alpha,\beta}^{w,z}(a,b))$-plane. By Lemma \ref{lem:ecc_func}, they are linear in $b$ and so, in order to compute them in an hourglass-shaped region ${\cal P}_{e,e'}(m)$, it suffices to obtain their evaluation only in those values of $(a,b)$ corresponding to the lines that form the boundary of the region, which is defined by $O(n)$ vertices.
 Note that for fixed $\alpha,\beta, w,z$, all functions $f_{\alpha,\beta}^{w,z}(a,b)$ have an important part in common, as illustrated in Figure~\ref{fig:running}. Thus, the common values of the distances from $\alpha$ to $z$ and from $w$ to $\beta$ only need to be computed once, which can be done in quadratic time. Each function will be different only in the portion of the path from $w$ to $z$ that depends on the segment defined by $(a,b)$, and this distance can be computed in constant time. This results in $O(n^6)$ running time for computing the $O(n^4)$ functions $f_{\alpha,\beta}^{w,z}(a,b)$ in a region ${\cal P}_{e,e'}(m)$.

\begin{figure}[t]
\centering
\includegraphics[width=0.5\textwidth]{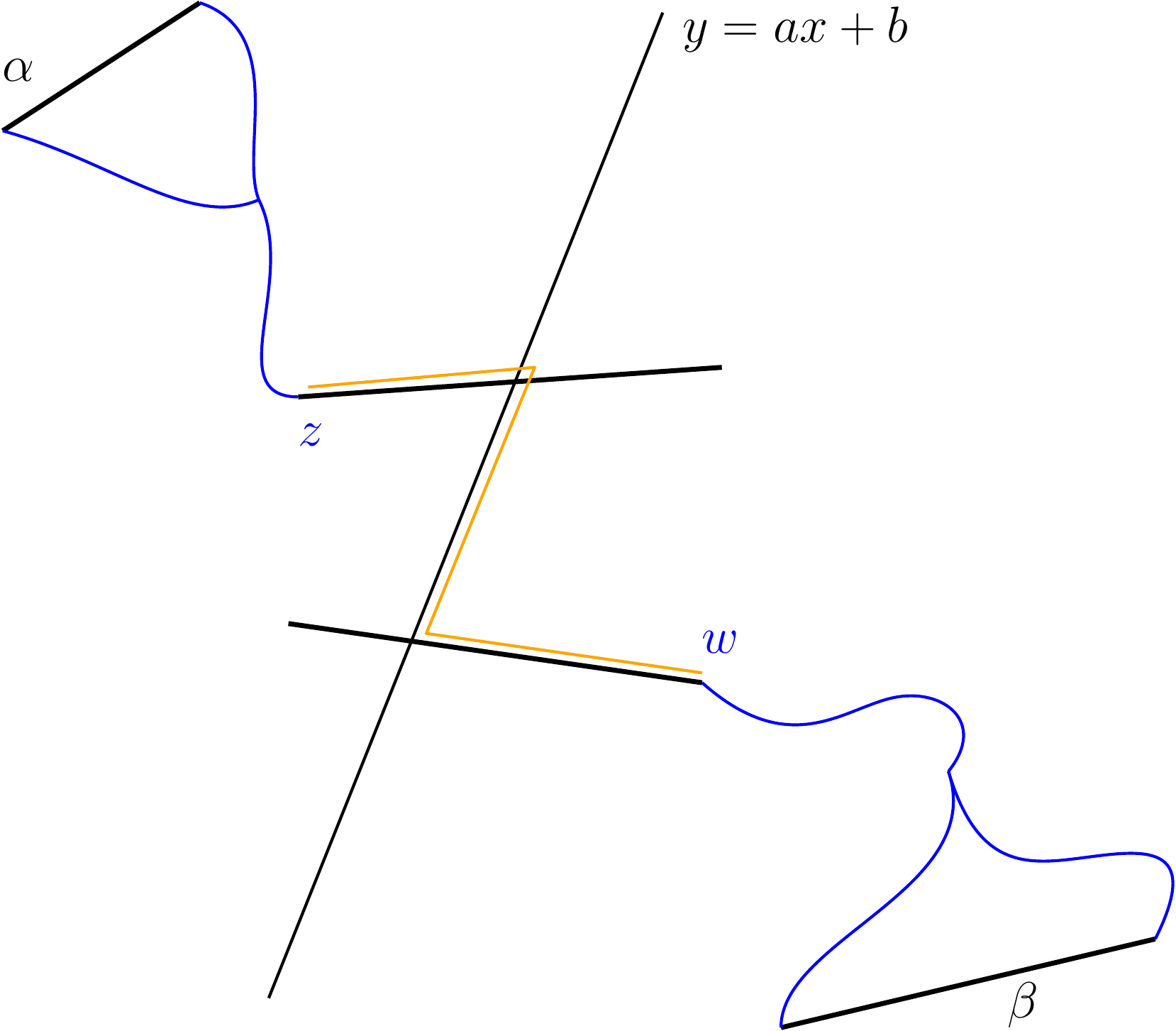}
\caption{The blue paths are common to all functions $f_{\alpha,\beta}^{w,z}(a,b)$ for fixed values of $\alpha,\beta, w,z$.}
\label{fig:running}
\end{figure}

An optimal shortcut for $\N$ in $\mathcal{P}_{e,e'}(m)$ is given by the minimum of the upper envelope of the set of lines $f_{\alpha,\beta}^{w,z}(a,b)$. Note that shortcuts must satisfy that ${\rm ecc}(t)<{\rm diam}(\N)$ for every $t\in s$, since, by definition, any shortcut improves the diameter of \N, and so all segments $p_ip_j$ must be included in the set of $O(n^2)$ diametral pairs $\alpha, \beta\in V(\mathcal{N})\cup E(\mathcal{N})$. Applying the same argument to the $O(n^4)$ regions $\mathcal{P}_{e,e'}(m)$, the result follows.
 \end{proof}

The importance of the preceding approach relies in the fact that the problem is polynomial.
However, its running time is very high as it involves $O(n^4)$ functions $f_{\alpha,\beta}^{w,z}(a,b)$ that must be computed, and this has to be done for each of the $O(n^4)$ regions $\mathcal{P}_{e,e'}(m)$.
Moreover, each evaluation of $f_{\alpha,\beta}^{w,z}(a,b)$ takes $O(n^2)$ time.
All in all, its running time amounts to $O(n^{10})$.

\subsection{Discretizing the set of possible shortcuts: approximation}

In light of the high running time of the previous approach, it becomes interesting to look for faster approximation algorithms.
Moreover, given the continuous nature of the problem, it is natural to wonder to what extent the problem can be discretized.
In other words, how good can shortcuts be if we restrict them to some discrete collection of segments?
The most natural choice for such a collection is probably the segments defined by pairs of vertices of $\N$, but this choice can lead to poor results, as the example in  \figurename~\ref{fig:scvertex}(a) shows.
In some cases, one can do better by considering the \emph{maximal extensions} of such segments: the maximal extension of a segment $s$ is the longest segment containing $s$ that has both endpoints on edges of \N.
Yang~\cite{yang} showed that for paths, it is enough to consider only maximal extensions, a fact that allowed him to obtain an additive approximation for this class of graphs.
Unfortunately, as \figurename~\ref{fig:scvertex}(b) illustrates, maximal extensions do not work anymore as soon as $\N$ is a tree: an extension of a segment can lead to a worse diameter than the segment itself.
However, in this section, we show that if one
considers \emph{all} extensions of segments defined by two vertices of $\N$, then it is possible to guarantee
an approximation factor for general networks.

\begin{figure}[t]
\begin{center}
\includegraphics{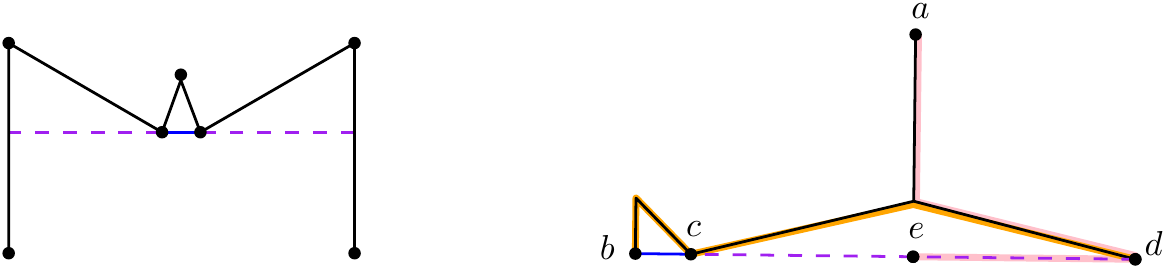}
\end{center}
\caption{(a) The optimal shortcut is the dashed purple segment. The blue segment (and any other segment between two vertices) gives a larger diameter.
(b) the original diameter is given by the orange path. The best shortcut connecting two vertices is $bc$. Contrary to intuition, extending $bc$ to $bd$ worsens the diameter, which becomes given by points $a$ and $e$ (pink path).}\label{fig:scvertex}
\end{figure}

Let $\mathcal{S}$ be the infinite set of segments with endpoints in $\N$, and let $\mathcal{S}_2 \subset \mathcal{S}$ be the subset of segments of $\mathcal{S}$ that contain two vertices of $\N$. The following proposition states that set $\mathcal{S}_2$ is an approximation of $\mathcal{S}$.

\begin{proposition} \label{prop:4d}
Let $\rho$ be the largest edge length in $\N$. Then, $$\min_{s \in \mathcal{S}}{\rm diam}(\N \cup s) \leq \min_{s \in \mathcal{S}_2}{\rm diam}(\N \cup s) \leq \min_{s \in \mathcal{S}}{{\rm diam}(\N \cup s)}+4\rho.$$
\end{proposition}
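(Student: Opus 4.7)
The left inequality is immediate since $\mathcal{S}_2 \subseteq \mathcal{S}$, so the work lies in the right inequality. My plan is constructive: from any $s^* = pq$ that attains $\min_{s \in \mathcal{S}}{\rm diam}(\N \cup s)$, I will produce a segment $s' \in \mathcal{S}_2$ whose diameter exceeds that of $\N \cup s^*$ by at most $4\rho$. Let $p$ lie on the network edge $e_p = u_1 u_2$ and $q$ on $e_q = v_1 v_2$. Pick any $u \in \{u_1, u_2\}$ and $v \in \{v_1, v_2\}$, and take $s' = uv$; since the two network vertices $u$ and $v$ lie on $s'$, we have $s' \in \mathcal{S}_2$.

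The geometric heart of the argument is a short triangle-inequality chain. From $|up|, |vq| \leq \rho$ I get $|uv| \leq |up| + |pq| + |vq| \leq |pq| + 2\rho$, and for every $a, b \in \N$ the bounds $d_\N(a, u) \leq d_\N(a, p) + \rho$ and $d_\N(v, b) \leq d_\N(q, b) + \rho$. Chaining these along the path $a \to u \to v \to b$ in $\N \cup s'$, together with the symmetric path $a \to v \to u \to b$ and the trivial $d_{\N \cup s'}(a, b) \leq d_\N(a, b)$, yields
\[
d_{\N \cup s'}(a, b) \leq \min\bigl(d_\N(a, b),\, d_\N(a, p) + |pq| + d_\N(q, b),\, d_\N(a, q) + |pq| + d_\N(p, b)\bigr) + 4\rho.
\]
When $s^*$ is simple (meets $\N$ only at $p$ and $q$), the inner minimum is exactly $d_{\N \cup s^*}(a, b)$, so taking the maximum over $(a, b)$ delivers ${\rm diam}(\N \cup s') \leq {\rm diam}(\N \cup s^*) + 4\rho$.

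The main obstacle I foresee is the non-simple case: interior crossings of $s^*$ with $\N$ become vertices of $\N \cup s^*$, so the shortest $(\N \cup s^*)$-path between some pair may traverse only a sub-segment of $s^*$, making the three-way minimum above strictly larger than $d_{\N \cup s^*}(a, b)$. My plan there is to apply the same $2\rho$ triangle inequality to every such sub-segment separately, taking $u, v$ to be vertices of the edges hosting its endpoints, and to exploit the flexibility of the four possible choices of vertex pair so that the sub-segment actually realizing the diameter of $\N \cup s'$ is absorbed; alternatively, one could first show that an optimal shortcut can always be chosen simple and reduce to the clean argument above. Finally, pairs where $a$ or $b$ lies on $s' \setminus \N$ are handled by first routing to the nearer of $u, v$, a cost $\leq |uv|$ already absorbed in the $4\rho$ slack.
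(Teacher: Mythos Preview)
Your argument for the simple case is clean and correct, but the non-simple case is a genuine gap, and neither of your proposed fixes closes it.

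The difficulty is exactly the one you identify: in the planar model a shortest $(\N\cup s^*)$-path may use an interior sub-segment $p_ip_j$ of $s^*$, where $p_i\in e_i$ and $p_j\in e_j$ are crossings with intermediate edges. Your segment $s'=uv$ is fixed once and for all by the endpoints of $e_p$ and $e_q$; it need not cross $e_i$ or $e_j$ at all (it may even point in a very different direction if $e_p,e_q$ are steeply inclined). So there is no way to simulate the sub-segment $p_ip_j$ with $s'$, and the four choices of $(u,v)$ do not help: they all live near $p$ and $q$, not near $p_i$ and $p_j$. Your alternative---reducing to a simple optimal shortcut---is not available either; in the planar model there is no reason an optimal shortcut should be simple, and the paper's surrounding discussion (e.g.\ the non-monotone behaviour of extensions) indicates that one cannot expect such a reduction.

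The paper resolves this with a different construction of $s'$. Instead of the chord $uv$, it takes the vertices of $\N$ lying in the two ``triangles'' $upqu'$ and $vpqv'$ flanking $s^*$, forms their convex hulls $C_1,C_2$, and lets $s'$ be (an extension of) a common internal tangent of $C_1$ and $C_2$. This guarantees two things simultaneously: $s'$ passes through two vertices of $\N$ (the tangent points), so $s'\in\mathcal{S}_2$; and $s'$ separates $C_1$ from $C_2$, hence crosses every edge of $\N$ that $s^*$ crosses. The second property is the crucial one you are missing: for every crossing $p_i=s^*\cap e_i$ there is a corresponding crossing $p_i'=s'\cap e_i$ with $|p_ip_i'|\le\rho$, so any sub-segment $p_ip_j$ of $s^*$ can be replaced by $p_i'p_j'$ on $s'$ at cost $\le 2\rho$ (triangle inequality), plus $\le 2\rho$ to reconnect along $e_i$ and $e_j$. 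That is where the $4\rho$ comes from in general, and the same crossing-by-crossing correspondence also handles points $a\in s'\setminus\N$ cleanly---your ``route to the nearer of $u,v$'' sketch does not, since $|uv|$ is not bounded by $4\rho$.
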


\begin{proof}
The first inequality is straightforward. For the second, it suffices to prove that given $s=pq \in \mathcal{S}$ there exists $s' \in \mathcal{S}_2$ such that ${\rm diam}(\N \cup s') \leq {\rm diam}(\N \cup s)+4\rho$.

Segment $s$ may cross several faces of $\N$, refer to \figurename~\ref{fig:aproximating}(a).

Consider the first and the last ones, say $\mathcal{F}_1$ and $\mathcal{F}_2$, together with the vertices of $\N$ that are adjacent to $p$ and $q$ in those faces: $u,v$ in $\mathcal{F}_1$ and $u',v'$ in $\mathcal{F}_2$. Let $V_1$ be the vertices of $\N$ in the quadrilateral $upqu'$ (including $u$ and $u'$), and let $C_1$ be its convex hull. Analogously, we have $V_2$ and $C_2$ for the quadrilateral $vpqv'$. Note that both convex hulls may have one point in common. Extend one of the common internal tangents of $C_1$ and $C_2$ giving rise to a segment $s'$ with endpoints on two of the edges of $\mathcal{F}_1$ and $\mathcal{F}_2$ containing points $p$ and $q$. Observe that $s'$ intersects all the edges of $\N$ that are crossed by $s$. This construction allows us to show that, for any two points $a,b\in \N$, the length of the shortest path between $a$ and $b$ that uses $s'$ is at most $4\rho$ plus the corresponding length but using $s$. To do this, we first use the triangle inequality to compare the lengths of the used portions of segments $s$ and $s'$, which gives a difference of $2\rho$, and then we add the two distances indicated in \figurename~\ref{fig:aproximating}(b). A similar argument is used for $a\in s'$ and $b\in \N$.\end{proof}

\begin{figure}
    \centering
    \includegraphics{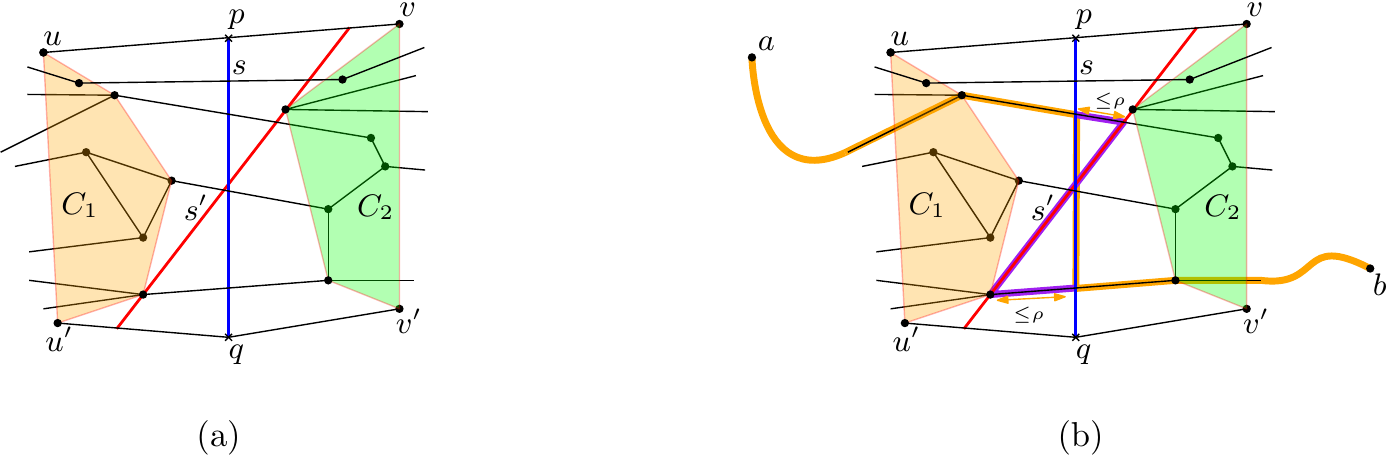}
    \caption{(a) Approximating a shortcut $s$ with a segment $s' \in \mathcal{S}_2$. (b) Using $s'$ instead $s$ to go from $a$ to $b$ causes a detour of at most $4\rho$ (purple path).}
    \label{fig:aproximating}
\end{figure}

The collection $\mathcal{S}_2$ is finite but quite large, it has size $O(n^4)$, which gives a time complexity of $O(n^6)$ to compute the optimal among the segments in $\mathcal{S}_2$. Indeed, there are $O(n^2)$ possible extensions per each pair of vertices, and for each of them one needs to compute the diameter from scratch in $O(n^2)$ time~\cite{hln-tccs-1991}. 

We would like to find a small subset of $\mathcal{S}_2$ that preserves the property in Proposition \ref{prop:4d}. Ideally, we would like to consider not all the extensions of a segment with endpoints in $V(\N)$ (that is exactly  $\mathcal{S}_2$), but only the best extension for each segment. 
Unfortunately, this appears rather difficult, as extensions of a segment do not seem to behave monotonically: already for a tree with a single vertex of degree larger than two, it may happen that an extension of a segment gives a worse diameter than the segment itself, see \figurename~\ref{fig:scvertex}(b).
However, we show next that we can speed-up the computation of the diameter for each extension in $\mathcal{S}_2$, saving a nearly-linear factor in the total running time.

Given a segment $s'=p'q'$, let $r$ be the ray starting at $p'$ and containing $s'$, and let $\mathcal{P}=p_0, p_1, \ldots , p_k$ be the sorted list of intersection points of $r$ with edges of $\N$. Segments $s_i=p'p_i$ are called extensions of $s'$ to the \emph{right}; the extensions to the \emph{left} are defined similarly.
Next we show how to speed-up the re-computation of the diameter of $\N \cup s_i$ as we insert $s_0,s_1, \dots s_k$, in that order.
To that end, we split the re-computation of distances into two parts:  distances from points on $s_i$ to points on $\N$,
and distances (in $\N \cup s_i)$ between two points on $\N$.

\begin{figure}[t]
\begin{center}
\begin{tabular}{ccccccc}
\includegraphics{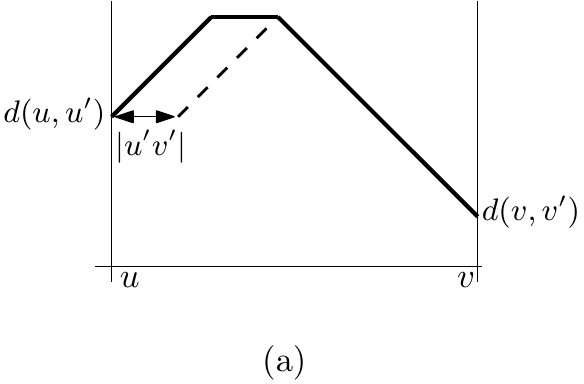}
& & & & & &
\includegraphics{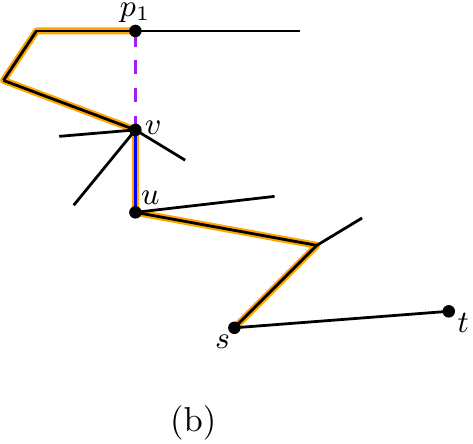}
\end{tabular}
\end{center}
\caption{(a) Function $\Phi_{uv}^{st}$. (b) If the distance from $p_1$ to $st$ decreases when adding $s_0=uv$, then, after adding $s_1=up_1$, it will become even smaller.}
\label{fig:phi_plus}
\end{figure}

\begin{lemma} \label{lem:es}
Let $u$ and $v$ be vertices of $\N$. It is possible to compute the eccentricities of all the extensions to the right  of segment $uv$ in $O(n^2)$ time.
\end{lemma}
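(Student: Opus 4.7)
The plan is to compute the eccentricity of each extension $s_i=up_i$, $i=0,1,\ldots,k$, incrementally as $i$ grows from $0$ to $k=O(n)$; to achieve $O(n^2)$ total time it suffices to spend $O(n)$ amortized time per extension. First I enumerate the intersections $p_0,p_1,\ldots,p_k$ of the ray from $u$ through $v$ with the edges of $\N$ (in $O(n)$ time, by scanning the edges). Then, using the $O(n^2)$-time machinery that underlies the continuous diameter computation of \cite{CG82,CGGMPR17}, I precompute the shortest-path distances in $\N$ from each $p_j$ and from each vertex of $\N$ to every other such point. This produces $O(n^2)$ distance values in $O(n^2)$ time.

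The eccentricity of $s_i$ equals $\max_{t\in s_i}\max_{z\in\N\cup s_i}d_{\N\cup s_i}(t,z)$. The case $z\in s_i$ contributes at most $|s_i|$ and is trivial. For $z$ on an edge $e\in E(\N)$, I argue that $d_{\N\cup s_i}(t,z)$ is the minimum, over crossings $p_j$ of $s_i$ with $\N$ (including $u$ and $p_i$) and endpoints $w$ of $e$, of expressions of the form $|tp_j|+d_\N(p_j,w)+|wz|$; that is, a shortest path uses at most one maximal subsegment of $s_i$. Taking the maximum over $t$ and $z$ gives, for each edge $e$, a contribution $\Phi_{s_i,e}$ which is the upper envelope of a family of linear functions in the style of Lemma~\ref{lem:ecc_func}. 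The eccentricity of $s_i$ is then $\max_e \Phi_{s_i,e}$.

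To update from $s_{i-1}$ to $s_i$, only a single new crossing $p_i$ is introduced, so only $O(n)$ new linear functions enter the game per extension. The key is the monotonicity property sketched in Figure~\ref{fig:phi_plus}(b): once a shortcut route becomes shorter than the corresponding network-only route (for a fixed pair of anchor points), every further extension only widens the gap. Consequently, a witness linear function that is dominated at stage $i$ remains dominated at every later stage; each of the $O(n^2)$ candidate functions is therefore inserted and discarded at most once across the whole sweep, giving $O(n^2)$ total work for maintaining all envelopes and reading off their maxima.

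The main obstacle I foresee is making the amortization rigorous. The eccentricity involves maximizing over $t$ in the \emph{interior} of $s_i$ between consecutive crossings, not merely at crossings or endpoints, so I must show that such interior maxima either coincide with crossings/endpoints of $s_i$ or admit their own monotone structure as $i$ grows. This is where the monotonicity of Figure~\ref{fig:phi_plus}(b) must be applied carefully, together with the fact that in each piece of the piecewise-linear eccentricity function on $s_i$ the maximizing $z$ changes in a predictable way; nailing this down is the real technical content of the proof.
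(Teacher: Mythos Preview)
Your proposal has the right high-level shape—process extensions incrementally and exploit a monotonicity property—but it misses the structural fact that makes the paper's argument clean, and the amortization you sketch is not only unrigorous (as you acknowledge) but also mis-stated.

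First, the framing ``$\Phi_{s_i,e}$ is the upper envelope of a family of linear functions'' is not accurate. For fixed $t\in s_i$ and $z\in e$, the distance $d_{\N\cup s_i}(t,z)$ is the \emph{minimum} over candidate paths $|tp_j|+d_\N(p_j,w)+|wz|$; the eccentricity then takes a maximum over $t$ and $z$. So the object you are trying to maintain is a $\max$--$\min$ of bilinear functions, not an upper envelope, and your dominance argument (``once dominated, always dominated'') is about the wrong envelope. Moreover, the domain of $t$ grows with $i$, so a function dominated on $s_{i-1}$ need not be dominated on the new portion $p_{i-1}p_i$; nothing in your sketch handles this.

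The paper avoids all of this by using the functions $\Phi_{uv}^{st}$ of~\cite{hln-tccs-1991}: for an edge $uv$ (here, the current extension or its newest sub-segment) and a target edge $st$, $\Phi_{uv}^{st}$ records $\max_{z\in st} d(\cdot,z)$ as a function on $uv$, and it has a \emph{constant-size} description—two linear ramps and a plateau—fully determined by the farthest points on $st$ from $u$ and from $v$ and the distance between those two points. Hence each $\Phi$ can be built in $O(1)$ time from endpoint data, and the eccentricity of $s_i$ is the maximum of $O(n)$ such constant-size functions, giving $O(n)$ per extension outright, with no amortization needed. The monotonicity (your Figure~\ref{fig:phi_plus}(b) observation) is used only to argue that when extending by one crossing you need to update distances for the single new endpoint $p_i$, not for the earlier $p_j$'s; this is what keeps the per-step update linear. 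That constant-complexity shape of $\Phi$ is precisely the ``missing idea'' in your attempt and is what dissolves the obstacle you identify about interior maxima.
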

\begin{proof}
As a preprocessing step, we store the distances from each vertex to all the other edges and the point at each edge attaining that maximum distance. This allows us to construct the functions $\Phi_{uv}^{st}:[0,1] \rightarrow \mathbb{R}^+$ that encode the information of the maximum distance from each point on an edge $uv$ to an edge $st$ (see~\cite{hln-tccs-1991}). Their
shape is as follows (see \figurename~\ref{fig:phi_plus}(a)).

Let $u'$ and $v'$ be the farthest points to, respectively, $u$ and $v$ in edge $st$.
Function $\Phi_{uv}^{st}$ increases uniformly from $0$ and from $1$ until the distance between both lines equals the distance between $u'$ and $v'$, at that moment it stabilizes horizontally. 

Therefore, knowing the farthest points $u'$ and $v'$ to $u$ and $v$ in the segment $st$ (and the distance between them), it is possible to build $\Phi_{uv}^{st}$ in constant time. The main idea of the proof is that it is also possible to update each map $\Phi_{uv}^{st}$ for each extension of a segment in constant time. Observe that $\Phi_{uv}^{st}$ encloses the information of the largest distance from any point of $uv$ to the segment $st$.

In a first step, we consider the addition of the segment $s_0=uv$. As $u$ and $v$ are on the network $\N$, they belong to some edges $g$ and $g'$, and we use the information of $\Phi_g^{st}$ and $\Phi_{g'}^{st}$ to find the largest distance from $u$ and $v$ to $st$ (in the network $\N$). With that information, we compute $\Phi_{uv}^{st}$ in constant time.
Thus, the maximum eccentricity of the edge $s_0=uv$ can be computed in linear time.

 Note that, by building the map $\Phi_{uv}^{st}$, it is possible to detect whether vertex $v$ changes its eccentricity with respect to $st$. Thus, we update the values of the distances from $vp_0$ to all the other edges, and the point on each edge giving that maximum distance (again, in linear time).

The addition of $s_0$ may change the eccentricity of $p_1$ with respect to some other edge (and the same with any of the other $p_i$'s), but we do not need to update that information at this moment.
Indeed, if the distance from $p_1$ to $st$ changes when adding $s_0$, it can only decrease.
Hence, the addition of $s_1$ will make that distance even smaller; see \figurename~\ref{fig:phi_plus}(b). Thus, in step $i$, we only need to update the information of the new vertex $p_i$, since by adding $s_i$, the value of $p_{i+1}$ will be updated.
\end{proof}

\begin{lemma} \label{lem:extension}
Let $u$ and $v$ be vertices of $\N$. It is possible to find the extension $s$ of segment $uv$ that minimizes ${\rm diam}(\N \cup s)$ in $O(n^3 \log n)$ time.
\end{lemma}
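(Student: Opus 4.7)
The plan is to enumerate all $O(n^{2})$ candidate extensions of $uv$ (each specified by a left endpoint $p_j$ along the ray from $v$ through $u$ and a right endpoint $p_i$ along the opposite ray) and, for each extension $s$, to compute $\diam(\N\cup s)$ in amortized $O(n\log n)$ time, for a total of $O(n^{3}\log n)$.

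I would begin with an $O(n^{2}\log n)$ preprocessing phase: run Dijkstra from each vertex of $\mathcal{N}$ to obtain all pairwise vertex-to-vertex distances, and for every vertex $w$ and every edge $e$ of $\mathcal{N}$ store the farthest point of $e$ from $w$ together with the attained distance. This stored information is precisely what is needed to evaluate the maps $\Phi_{e}^{e'}$ used in the proof of Lemma~\ref{lem:es}, so that for any edge pair the maximum distance in $\mathcal{N}$ between points on the two edges is available in $O(1)$ time.

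For each choice of left endpoint $p_j$ I would then apply the incremental technique of Lemma~\ref{lem:es} to the rightward ray, so that after $O(n^{2})$ processing time the eccentricity of every candidate shortcut $p_jp_i$ is available; this already handles the diametral pairs of $\N\cup p_jp_i$ in which at least one of the two points lies on the shortcut. The remaining pairs, with both points on $\mathcal{N}$, I would handle by sweeping the $p_i$'s outward along the ray: passing from $p_i$ to $p_{i+1}$ adds exactly one new vertex $q=p_{i+1}$ and one new sub-segment of $s$ (whose length equals the Euclidean gap to the previous intersection point, since all intersections of the line through $uv$ with $\mathcal{N}$ are collinear). Running Dijkstra from $q$ in the current augmented network, which has $O(n)$ vertices and edges, takes $O(n\log n)$ time and yields all distances $d_{\N\cup s}(q,\cdot)$. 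The new distance between any two points $a,b\in\mathcal{N}$ is then the minimum of its previous value and $d(a,q)+d(q,b)$. Combining these fresh distances through $q$ with the stored $\Phi_{e}^{e'}$ maps lets me recompute the maximum over all edge-edge, vertex-vertex, and vertex-edge pairs (as classified in \cite[Lemma 6]{CGGMPR17}) in $O(n\log n)$ per right extension.

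The main obstacle will be ensuring that each diameter update genuinely fits within $O(n\log n)$ rather than the $\Omega(n^{2})$ that a naive recomputation would require. Two structural ingredients are essential: the collinearity of the intersection points of $s$ with $\mathcal{N}$ makes each extension step introduce only a single new vertex and edge, and the precomputed $\Phi_{e}^{e'}$ maps let us recover the maximum distance on each edge pair in $O(1)$ once the fresh distances through $q$ are known. One caution is that, as Figure~\ref{fig:scvertex}(b) illustrates, the diameter is not monotone in the length of the extension, so the minimum of $\diam(\N\cup p_jp_i)$ must be taken after examining every right extension rather than terminating the sweep early. Summing over all $O(n)$ choices of $p_j$ yields the claimed $O(n^{3}\log n)$ running time.
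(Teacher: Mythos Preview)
Your approach diverges from the paper's, and the divergence is precisely where the gap lies. You split $\diam(\N\cup s)$ into the eccentricity $E(i)$ of the segment (handled correctly via Lemma~\ref{lem:es}) and the quantity $N(i)=\max_{a,b\in\N} d_{\N\cup s_i}(a,b)$, and then you try to recompute $N(i)$ for \emph{every} right extension in amortized $O(n\log n)$ time. That step is not justified. Running Dijkstra from the new vertex $q=p_{i+1}$ gives you $d(q,\cdot)$ in $O(n\log n)$ time, but $N(i)$ is a maximum over $\Theta(n^{2})$ pairs, and for each pair the new value is $\min\{d_{\text{old}}(a,b),\,d(a,q)+d(q,b)\}$. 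Nothing you describe lets you update this $\Theta(n^{2})$-sized maximum in sub-quadratic time: the ``stored $\Phi_{e}^{e'}$ maps'' were built for the original $\N$ and become stale the moment you insert part of the shortcut, so they cannot certify the current maximum. You flag this as ``the main obstacle'' but do not actually overcome it; as written, each right step costs $\Omega(n^{2})$ and the total is $O(n^{4})$.

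The paper avoids this by not computing $N(i)$ for every $i$. The key structural fact you missed is that $N(i)$ is \emph{monotone non-increasing} in $i$: extending the shortcut can only shorten distances between points that already lie on $\N$. (Your remark about non-monotonicity of the diameter concerns $\max\{E(i),N(i)\}$, not $N(i)$ alone.) Restricting attention to indices $i$ at which $E(i)$ is Pareto-minimal (no later $j$ has $E(j)<E(i)$), the sequence $E$ is non-decreasing while $N$ is non-increasing, so $\max\{E(i),N(i)\}$ is unimodal and a binary search suffices. Hence $N(i)$ is evaluated at only $O(\log n)$ indices, each in $O(n^{2})$ time, giving $O(n^{2}\log n)$ per left extension and $O(n^{3}\log n)$ overall.
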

\begin{proof}
The value of ${\rm diam}(\N \cup s)$ can be computed by calculating the eccentricity of segment $s$ and comparing with the eccentricities in $\N \cup s$ of all the points in \N.
Thus, for each extension $s'$ of $uv$ to the left, we compute the eccentricities $E(i)$ of all its extensions $s_i$ to the right using Lemma \ref{lem:es}, in $O(n^2)$ time.
Let $N(i)$ be the maximum distance in $\N \cup s_i$ between pairs of points in \N.
Our goal is to compute $\min_i \max \{ E(i), N(i) \}$.

Since $N(i)$ is a decreasing function as $i$ grows, we do not need to compute $N(i)$ for all values of $i$, but only for those $i$ for which $E(i)$ is maximal: such that there is no $j>i$ with $E(j)<E(i)$ (see \figurename~\ref{fig:domina}).
Therefore, we can look for  that minimum by binary search, computing $N(i)$ only for $O(\log n)$ values of $i$.
Using \cite{f-fafs-1987}, we can update the distances between vertices in quadratic time and then compute $N(i)$ also in quadratic time (the distance between pairs edge-edge and vertex-edge can be computed in constant time knowing the distance between vertices), giving a total time of $O(n^3 \log n)$.
\end{proof}

\begin{figure}[tb]
\begin{center}
\includegraphics{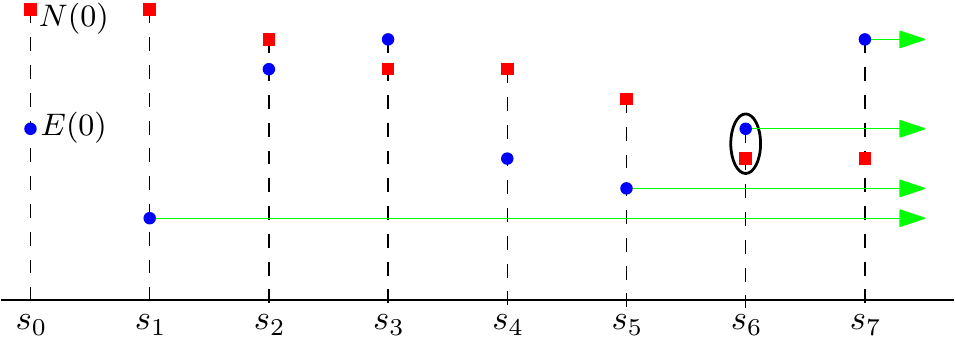}
\end{center}
\caption{For each segment extension, we consider two values:  $E(i)$---maximum eccentricity on $s_i$---(blue points), and $N(i)$---maximum eccentricity in $\N \cup s_i$ for points in \N---(red squares). For each $s_i$, the diameter of $\N \cup s_i$ is given by the maximum of these two values.
Only the blue points with green arrows must be tested (they are the maximal blue points). }\label{fig:domina}
\end{figure}

We thus obtain the main result in this section.

\begin{theorem} \label{th:approx}
Let $\rho$ be the largest edge length in a network $\N$. Then, it is possible to find a segment $s'$ such that $ {\rm diam}(\N \cup s') \leq  \min_{s \in \mathcal{S}}{\rm diam}(\N \cup s)+4\rho$ in $O(n^5 \log n)$ time.
\end{theorem}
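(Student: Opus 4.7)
The plan is to combine the two previously established ingredients: Proposition \ref{prop:4d} tells us that restricting the search to the discrete family $\mathcal{S}_2$ of segments passing through two vertices of $\N$ loses at most an additive $4\rho$ in the resulting diameter, and Lemma \ref{lem:extension} tells us that for each fixed pair of vertices $u,v \in V(\N)$ we can find, in $O(n^3 \log n)$ time, the extension of the segment $uv$ that minimizes ${\rm diam}(\N \cup s)$ among all extensions of $uv$. The theorem should follow by iterating the latter procedure over all vertex pairs.

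More concretely, first I would note that every segment in $\mathcal{S}_2$ is, by definition, an extension of the segment joining some pair of vertices of $\N$. Hence
\[
\min_{s\in \mathcal{S}_2}{\rm diam}(\N\cup s) \;=\; \min_{\{u,v\}\subseteq V(\N)} \; \min_{s \text{ extension of } uv}{\rm diam}(\N\cup s).
\]
The algorithm then enumerates the $O(n^2)$ unordered pairs $\{u,v\}$ of vertices of $\N$; for each such pair it invokes the procedure of Lemma \ref{lem:extension} to obtain, in $O(n^3\log n)$ time, the extension $s_{uv}$ of $uv$ that minimizes ${\rm diam}(\N\cup s_{uv})$; and finally it returns the segment $s'$ among all such $s_{uv}$ that achieves the smallest value of ${\rm diam}(\N\cup s')$.

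For correctness, by construction $s'$ realizes $\min_{s\in \mathcal{S}_2}{\rm diam}(\N\cup s)$, and applying the second inequality of Proposition \ref{prop:4d} gives ${\rm diam}(\N\cup s') \leq \min_{s\in\mathcal{S}}{\rm diam}(\N\cup s)+4\rho$, which is the claimed approximation guarantee. For the running time, the preprocessing needed by Lemma \ref{lem:es} (all vertex-to-edge maximum distances and the corresponding realizing points) is done once at the beginning in $O(n^2)$ time and reused for every pair. The main loop then performs $O(n^2)$ iterations, each of cost $O(n^3\log n)$, giving the claimed $O(n^5\log n)$ bound.

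There is no real obstacle here; the only thing to be slightly careful about is that ``extensions of $uv$'' in Lemma \ref{lem:extension} must cover both directions from the ray, but that lemma already combines left and right extensions, so no additional factor is incurred. The proof is therefore little more than assembling Proposition \ref{prop:4d} and Lemma \ref{lem:extension}.
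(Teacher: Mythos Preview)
Your proposal is correct and matches the paper's approach exactly: the paper simply states ``We thus obtain the main result in this section'' after Lemma~\ref{lem:extension}, implicitly combining Proposition~\ref{prop:4d} with Lemma~\ref{lem:extension} applied to all $O(n^2)$ vertex pairs, precisely as you spell out. Your write-up is in fact more explicit than the paper's, but the underlying argument is identical.
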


This result immediately gives a simple approximation algorithm: subdivide each edge in \N by adding dummy vertices such that the largest resulting edge length is $\varepsilon$.
Then the previous theorem implies the following result, which is a generalization to general networks of the result for paths presented in~\cite[Theorem 8.1]{yang}.

\begin{corollary}
Let $\rho$ be the largest edge length in a network $\N$. Then, for any $0 < \varepsilon < \rho/2$ it is possible to find a segment $s'$ such that $  {\rm diam}(\N \cup s') \leq  \min_{s \in \mathcal{S}}{\rm diam}(\N \cup s)+4 \varepsilon$ in $O((n \rho/\varepsilon)^5 \log (n \rho))$ time.
\end{corollary}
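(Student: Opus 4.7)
The plan is to reduce the corollary directly to Theorem \ref{th:approx} by a refinement of the network that shrinks the parameter $\rho$ to $\varepsilon$ without changing the geometric problem.

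First, I would construct a refined network $\N'$ from $\N$ as follows: for each edge $e$ of $\N$ of length $|e|$, insert $\lceil |e|/\varepsilon \rceil - 1$ equally spaced dummy vertices along $e$. This partitions each edge into subedges of length at most $\varepsilon$. The locus of $\N'$ is identical to that of $\N$, so $\diam(\N') = \diam(\N)$, the set $\mathcal{S}$ of candidate shortcuts is the same for both, and for every segment $s$ with endpoints on the network we have $\diam(\N'\cup s) = \diam(\N \cup s)$. In particular,
\[
\min_{s \in \mathcal{S}} \diam(\N' \cup s) \;=\; \min_{s \in \mathcal{S}} \diam(\N \cup s).
\]
The total number of vertices of $\N'$ is $n' = O\!\left(n\rho/\varepsilon\right)$, since $\sum_{e} \lceil |e|/\varepsilon \rceil \leq n + \sum_e |e|/\varepsilon \leq n + n\rho/\varepsilon = O(n\rho/\varepsilon)$, and the largest edge length of $\N'$ is at most $\varepsilon$ (which is well-defined since $\varepsilon < \rho/2$ guarantees at least one subdivision per long edge).

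Next, I would apply Theorem \ref{th:approx} to $\N'$, with the role of $\rho$ played by $\varepsilon$. This yields in time $O((n')^5 \log n') = O\!\left((n\rho/\varepsilon)^5 \log(n\rho/\varepsilon)\right)$ a segment $s'$ satisfying
\[
\diam(\N' \cup s') \;\leq\; \min_{s \in \mathcal{S}} \diam(\N' \cup s) + 4\varepsilon.
\]
Translating back via the identity of the two loci,
\[
\diam(\N \cup s') \;\leq\; \min_{s \in \mathcal{S}} \diam(\N \cup s) + 4\varepsilon,
\]
as required. For the stated running time, observe that $\log(n\rho/\varepsilon) = O(\log(n\rho))$ under the usual convention that $1/\varepsilon$ is polynomially bounded (or, more explicitly, this can be absorbed into the announced bound).

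There is no real obstacle here: the only mild point to double-check is that the refinement is purely combinatorial and does not alter shortest-path distances or the family of admissible shortcuts, which is immediate because $\N$ and $\N'$ coincide as point sets. The rest is bookkeeping on the size of $\N'$ and a direct invocation of Theorem \ref{th:approx}.
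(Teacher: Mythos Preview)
Your proposal is correct and follows exactly the approach the paper indicates: subdivide every edge with dummy vertices so that the maximum edge length drops to $\varepsilon$, observe that the locus is unchanged, and invoke Theorem~\ref{th:approx} on the refined network. The paper states this in one sentence just before the corollary and gives no further detail, so your write-up is in fact more explicit than the original; the only loose end---$\log(n\rho/\varepsilon)$ versus the stated $\log(n\rho)$---is a minor discrepancy already present in the paper's own bound rather than a flaw in your argument.
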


\section{Path networks}\label{sec:path}

In the remaining, we focus on networks that are paths.
To illustrate the complexity of this seemingly simple setting, we begin by observing that the insertion of a shortcut into a path can create a quadratic number of diametral pairs;
as illustrated in the construction in \figurename~\ref{fig:quadratic_pairs}. It consists of $\Theta(n)$ spikes placed symmetrically with respect to the midpoint of the shortcut, denoted with $o$.
After inserting $pq$, each spike forms a face with a cycle of length roughly its height.
The spikes are spaced by one unit each, while their heights are set such that the distance from $o$ to the top of the spike is always the same, namely $|pq|/2$.
In this way, for any two spike tops $p_i$ and $q_j$ on the left and right of $o$, respectively, the distance between $p_i$ and $p_j$ on $\PP \cup pq$ is always equal to $|pq|$, which is also the diameter of $\PP \cup pq$.

\begin{figure}[t]
\begin{center}
\includegraphics{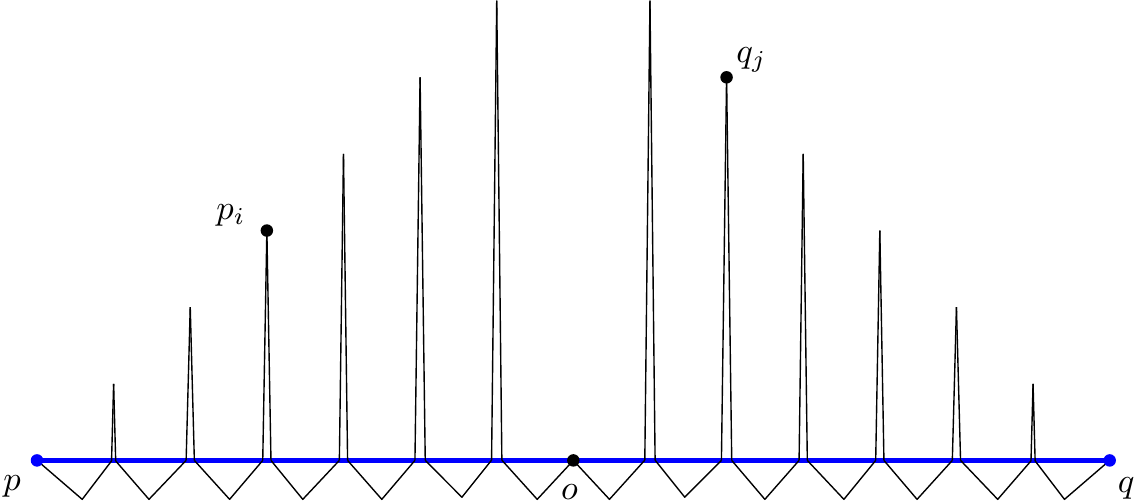}
\end{center}
\caption{Construction showing that the insertion of a shortcut $pq$ can create $\Theta(n^2)$ diametral pairs. The distance between the top of one spike on the left of $o$ and one on its right, like $p_i$ and $q_j$, is $|pq|$, and equals the diameter of $\PP \cup pq$.}
\label{fig:quadratic_pairs}
\end{figure}

\subsection{Diameter after inserting a shortcut}\label{sec:diameter}

The diameter of $\PP$ can be immediately computed in linear time, however, the addition of a shortcut $s$ can create a linear number of new faces, thus in principle it is not clear whether $\diam( \PP \cup s)$ can be computed in linear time, i.e., without computing the diameter between each pair of faces. The main result in this section is that this is still possible.

Path networks have the nice property that the maximal extension of an optimal shortcut is also optimal~\cite{yang}.
Thus, we can assume that $s=pq$ is maximal and horizontal.
The insertion of $s$ splits \PP into polygonal chains, which bound the different faces created.
Our goal is to compute the pair of chains that have maximum distance in  $\PP \cup s$.

We number the chains from $0$ to $m$ in the order of their left endpoints from left to right along $s$ (using right endpoints do disambiguate).
Except for possibly the first and last, all chains have both endpoints on $s$.
For the $i$th chain $C_i$, we denote its left and right endpoints by $p_i^l$ and $p_i^r$, respectively.
If the first vertex of $\PP$ is not on $s$, we consider the path from its first vertex to the first intersection of $\PP$ with $s$ as a degenerate loop chain with equal left and right endpoints on $s$ (analogous for the last vertex of $\PP$).

Refer to \figurename~\ref{fig:pathdiam-basics}.

Let $|C_i|$ be the length of $C_i$, let $L_i=|pp^{l}_i|$ and $R_i=|p^{r}_iq|$, and
let $s_i$ denote the segment $p_i^lp_i^r$.
Note that $C_i \cup s_i$ forms a cycle.
We use $D_i$ for the distance on $\PP \cup s$ from $p_i^l$ to its furthest point $\bar{p}_i^l$ on $C_i \cup s_i$ (i.e., $D_i$ is the semiperimeter of $C_i \cup s_i$); see \figurename~\ref{fig:pathdiam-basics}(b).

\begin{figure}[t]
\begin{center}
\includegraphics{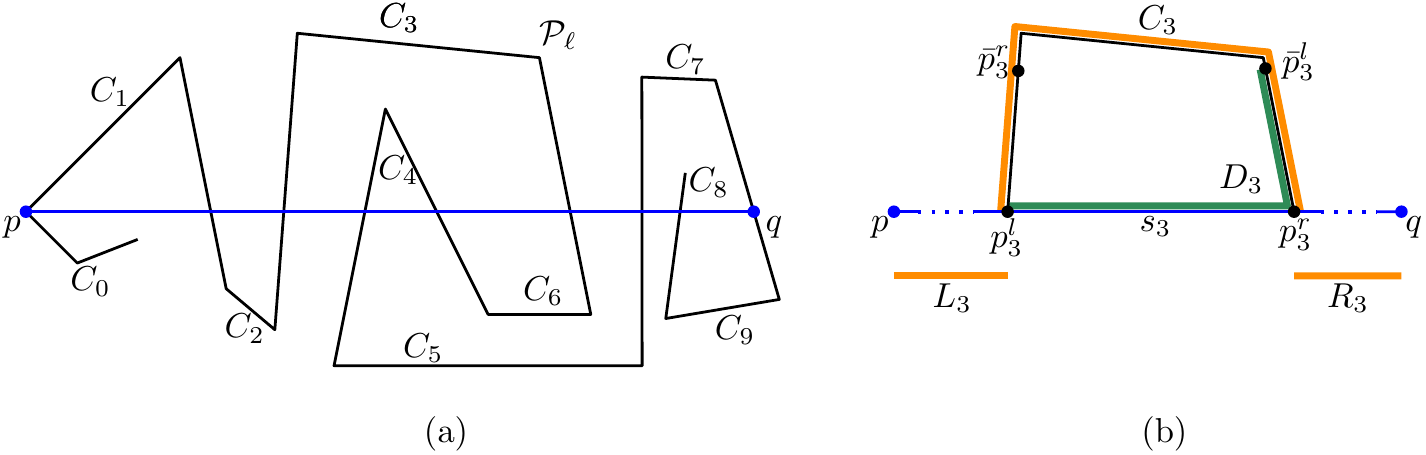}
\end{center}
\caption{(a) Chains created by $s$; $C_0$ and $C_8$ are degenerate chains.
(b) Detail for chain $C_3$, showing the cycle formed by $C_3 \cup s_3$. Thick lines are used here to denote distances.}
\label{fig:pathdiam-basics}
\end{figure}

We make some basic observations about the diameter between two chains, depending on their relative position, as shown in Figure \ref{fig:pathdiam-cases}.
They reveal the key property of the problem: the linear ordering between chains induced by $s$ defines uniquely how the diameter between two chains is achieved.

\begin{figure}[t]
\centering
\includegraphics{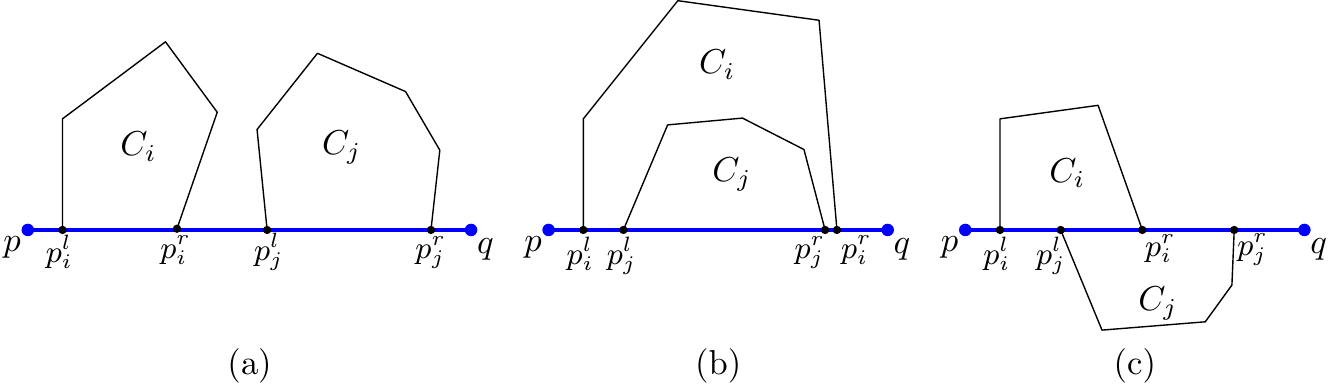}
\caption{Three cases when computing the diameter between two chains: (a) disjoint, (b) nested, and (c) overlapping.}
\label{fig:pathdiam-cases}
\end{figure}

\begin{observation}[Disjoint chains]
\label{obs:disjoint_chains}
Let $C_i,C_j$ be two chains of $(\PP \cup s)$ with $s_i \cap s_j = \emptyset$ and $s_i$ to the left of $s_j$.
The diameter of $C_i \cup p_i^l p_j^r \cup C_j$ is $D_i+|p_i^r p_j^l|+D_j = D_i+R_i-R_j-|s_j|+D_j$.
\end{observation}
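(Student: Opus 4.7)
The plan is to exploit the simple combinatorial structure of the subnetwork $N := C_i \cup p_i^l p_j^r \cup C_j$. Under the hypothesis that $s_i \cap s_j = \emptyset$ with $s_i$ to the left of $s_j$ on $s$, the interior of the bridge segment $p_i^r p_j^l \subset s$ meets neither $C_i$ nor $C_j$, so $N$ consists of two simple cycles $Z_i := C_i \cup s_i$ and $Z_j := C_j \cup s_j$ joined by this single bridge.

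I would first observe that every path in $N$ from a point $a \in Z_i$ to a point $b \in Z_j$ must traverse the bridge and hence pass through $p_i^r$ and $p_j^l$, giving
\[
d_N(a,b)=d_{Z_i}(a,p_i^r)+|p_i^r p_j^l|+d_{Z_j}(p_j^l,b).
\]
The key ingredient is the standard fact that on a simple cycle of perimeter $P$, the maximum distance from a fixed point to any other point equals $P/2$ and is attained at its antipode. Since $|Z_i|=|C_i|+|s_i|=2D_i$ (recall $D_i$ was defined as the semiperimeter of $Z_i$) and analogously $|Z_j|=2D_j$, this yields $\max_{a\in Z_i}d_{Z_i}(a,p_i^r)=D_i$ and $\max_{b\in Z_j}d_{Z_j}(b,p_j^l)=D_j$. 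Consequently, the maximum of $d_N(a,b)$ over cross-cycle pairs is exactly $D_i+|p_i^r p_j^l|+D_j$.

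I would then rule out the remaining candidates for the diameter: two points within the same cycle are at distance at most $D_i$ or $D_j$; a pair in which one point lies in the open interior of the bridge contributes at most $D_i+|p_i^r p_j^l|$ or $|p_i^r p_j^l|+D_j$; and two points lying on the bridge are at distance at most $|p_i^r p_j^l|$. All these are dominated by $D_i+|p_i^r p_j^l|+D_j$, proving the first equality.

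The second equality is mere bookkeeping on $s$: since $p_i^r, p_j^l, p_j^r, q$ appear in this order along $s$, we have $R_i=|p_i^r q|=|p_i^r p_j^l|+|s_j|+R_j$, so $|p_i^r p_j^l|=R_i-R_j-|s_j|$. The only point requiring a little care in the whole argument is justifying that no shortcut is gained by going around one of the cycles before crossing the bridge, which follows from the bridge being the unique edge-cut separating $Z_i$ from $Z_j$ in $N$; beyond this I do not anticipate a serious obstacle.
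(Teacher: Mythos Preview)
Your argument is correct and follows essentially the same approach as the paper: both proofs observe that the bridge $p_i^r p_j^l$ is the unique connection between the two cycles, so cross-cycle distances decompose additively, and the maximum on each side is the semiperimeter $D_i$ (resp.\ $D_j$); the second equality is the same bookkeeping in both. Your write-up is simply more explicit in dismissing the intra-cycle and on-bridge candidate pairs, which the paper leaves implicit.
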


\begin{proof}
Chain $C_i$ is completely to the left of $C_j$ (see Figure \ref{fig:pathdiam-cases}(a)), thus any path from a point in $C_i$ to one in $C_j$ must include segment $p_i^r p_j^l$.
Since the furthest point in $C_i$ from $p^r_i$ is $\bar{p}^r_i$, and the same for $p^l_j$ and $\bar{p}^l_j$, the result follows by definition of $D_i$ and $D_j$, and by observing that $|p_i^r p_j^l| = R_i-R_j-|s_j|$.
\end{proof}

\begin{observation}[Nested chains]
\label{obs:nested_chains}
Let $C_i,C_j$ be two chains of $(\PP \cup s)$ with $s_j \subset s_i$.
The diameter of $C_i \cup s_i \cup C_j$ is $\frac12(|C_i|+|p^l_i p^l_j|+|p^r_i p^r_j|+|C_j|) = \frac12(|C_i|+L_j-L_i+R_j-R_i+|C_j|)$.
\end{observation}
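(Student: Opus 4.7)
The plan is to recognize that the sub-network $H := C_i \cup s_i \cup C_j$ is homeomorphic to a theta graph, and then to compute its diameter from that structure. Since $s_j \subset s_i$, the points $p_j^l$ and $p_j^r$ lie strictly between $p_i^l$ and $p_i^r$ along $s_i$, and they are the only two vertices of degree $3$ in $H$. The three internally disjoint arcs between these two hubs are: (i) the subsegment $s_j$ of $s_i$, of length $|s_j|$; (ii) the chain $C_j$, of length $|C_j|$; and (iii) the concatenation of $[p_j^l, p_i^l] \subset s_i$, the chain $C_i$, and $[p_i^r, p_j^r] \subset s_i$, of total length $c := (L_j - L_i) + |C_i| + (R_j - R_i)$.

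Because $C_j$ is a polygonal path joining the endpoints of the straight segment $s_j$, the triangle inequality gives $|s_j| \le |C_j|$; analogously $|s_j| \le |s_i| \le c$. Hence $s_j$ is (weakly) the shortest of the three arcs and acts as a genuine shortcut between the two hubs inside $H$. For any two points $x \in C_j$ at arc-distance $v \in [0, |C_j|]$ from $p_j^l$ along $C_j$, and $y$ on arc (iii) at arc-distance $u \in [0, c]$ from $p_j^l$ along that arc, the distance $d_H(x,y)$ is the minimum over four candidate simple paths (two via $p_j^l$ and two via $p_j^r$, each either using or not using $s_j$); each candidate is linear in $(u,v)$. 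Setting the two non-$s_j$ candidates equal yields $u + v = (|C_j| + c)/2$, while setting the two $s_j$-candidates equal yields $u - v = (c - |C_j|)/2$; the unique solution is $u = c/2$, $v = |C_j|/2$, and the common value of all four candidates at that point is $\tfrac{1}{2}(|C_j| + c)$.

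It remains to check that no other configuration of pairs in $H$ exceeds this value. For two points on a single arc of length $\ell$, the largest $H$-distance is $(\ell + m)/2$, where $m$ is the length of the shorter of the two remaining arcs; and for a pair with one point on $s_j$ and the other on a different arc of length $\ell$, the largest $H$-distance is at most $(|s_j| + \ell)/2$. Since $|s_j| \le \min(|C_j|, c)$, all these quantities are bounded by $\tfrac{1}{2}(|C_j| + c)$. Substituting $c = (L_j - L_i) + |C_i| + (R_j - R_i)$ then gives the claimed formula. The main obstacle is just the routine case analysis of the four candidate paths for the inter-arc pair and the comparisons ruling out the other pair types; once the theta structure is identified and $s_j$ is recognized as the shortest arc, the rest is a straightforward optimization of a piecewise-linear function.
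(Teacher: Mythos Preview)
Your argument is correct and follows essentially the same idea as the paper: both recognise that $s_j$, being the shortest of the three arcs in the theta graph $H$, cannot carry a diametral point, so the diameter of $H$ equals that of the cycle formed by the two longer arcs, namely half its perimeter $\tfrac12(|C_j|+c)$. The paper states this reduction in one line (``the two chains behave effectively like a cycle, because no diametral pair can have a point on segment $s_j$, since $|C_j|>|s_j|$''), whereas you carry out the full theta-graph case analysis explicitly; the content is the same.

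One small slip worth fixing: at the critical point $(u,v)=(c/2,\,|C_j|/2)$ the two $s_j$-candidate paths have length $\tfrac12(|C_j|+c)+|s_j|$, not $\tfrac12(|C_j|+c)$, so it is not true that ``all four candidates'' coincide there. This is harmless for your argument, since those two paths are strictly longer and hence do not affect the minimum; the bound $d_H(x,y)\le\min\{u+v,\,|C_j|+c-u-v\}\le\tfrac12(|C_j|+c)$ already gives the upper bound, and your chosen point attains it.
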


\begin{proof}
In this case the endpoints of the chain $C_j$ are nested inside those of $C_i$ (see Figure \ref{fig:pathdiam-cases}(b)), hence the two chains behave effectively like a cycle, because no diametral pair can have a point on segment $s_j$, since $|C_j|>|s_j|$.
The cycle is formed by chains $C_i$, $C_j$, and the two subsegments of $s_i \setminus s_j$.
The diameter of any cycle is half of its perimeter, which in this case amounts to $\frac12(|C_i|+|s_i|-|s_j|+|C_j|)$.
The second equality follows from the fact that $|p^l_i p^l_j|+|p^r_i p^r_j| = L_j-L_i+R_j-R_i$.
\end{proof}

\begin{observation}[Overlapping chains]
\label{obs:overlapping_chains}
Let $C_i,C_j$ be two chains of $(\PP \cup s)$ with $s_i \cap s_j \neq \emptyset$, $p_i^l \notin s_j$ and $p_j^r \notin s_i$.
The diameter of $C_i \cup p_i^l p_j^r \cup C_j$ is
 $\frac12(|C_i|+|p^l_i p^l_j|+|p^r_i p^r_j|+|C_j|)  = \frac12(|C_i|+L_j-L_i+R_i-R_j+|C_j|)$.
\end{observation}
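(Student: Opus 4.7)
The network $G := C_i \cup p_i^l p_j^r \cup C_j$ has the structure of an outer cycle together with one chord. The overlap condition forces the ordering $p_i^l, p_j^l, p_i^r, p_j^r$ along $s$; write $a := L_j - L_i = |p_i^l p_j^l|$, $c := R_i - R_j = |p_i^r p_j^r|$, and $b := |p_j^l p_i^r|$ for the length of the overlap $s_i \cap s_j$. The outer cycle $O := C_i \cup [p_i^l, p_j^l] \cup C_j \cup [p_i^r, p_j^r]$ has perimeter $P := |C_i| + a + |C_j| + c$, and $G$ equals $O$ together with the chord $[p_j^l, p_i^r]$ of length $b$. The plan is to show $\diam(G) = P/2$, which coincides with the claimed formula.

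For the upper bound, any two points $u,v \in O$ satisfy $d_G(u,v) \leq d_O(u,v) \leq P/2$, using the fact that the diameter of a cycle equals half its perimeter and that adding a chord can only reduce distances. If instead $u$ lies in the interior of the chord at distance $t$ from $p_j^l$, and $v \in C_i$, the two natural routes from $u$ to $v$ (through $p_j^l$ and through $p_i^r$) have lengths summing to $|C_i|+a+b$, so the shorter of the two is at most $(|C_i|+a+b)/2$. The triangle inequality applied to the chain $C_j$ gives $|C_j| \geq |s_j| = b+c$, which yields $(|C_i|+a+b)/2 \leq P/2$. The subcases in which $v \in C_j$ or $v$ lies on $s$ are symmetric or strictly easier.

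For the lower bound, I plan to exhibit an explicit diametral pair. Take $x \in C_i$ at arc-length $\alpha^* := (|C_i|-a)/2$ from $p_i^l$ and $y \in C_j$ at arc-length $\beta^* := (|C_j|+c)/2$ from $p_j^l$; these are valid positions inside their respective chains because $a \leq |C_i|$ and $c \leq |C_j|$. A short computation shows that the two \emph{outer} routes from $x$ to $y$ (through $p_i^l, p_j^l$ and through $p_i^r, p_j^r$) each measure exactly $P/2$, while both routes exploiting the chord $[p_j^l, p_i^r]$ measure $P/2+b$, and all remaining walks are longer. The main subtlety is precisely this last verification: the specific choice of $\alpha^*, \beta^*$ is tailored so that the two outer routes balance at $P/2$ while simultaneously the chord provides no useful shortcut for this particular pair, which is what prevents the diameter from dropping below $P/2$.
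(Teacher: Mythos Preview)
Your argument is correct and in fact more careful than the paper's own justification. The paper simply asserts that ``the two chains behave effectively like a cycle'' formed by $C_i$, $C_j$, and the two segments $p_i^lp_j^l$ and $p_i^rp_j^r$, and then reads off half the perimeter; it does not explicitly address the overlap segment $[p_j^l,p_i^r]$ at all. You correctly recognise that $G$ is this outer cycle \emph{together with} a chord of length $b$, and then verify both directions: that the chord cannot push the diameter above $P/2$ (your upper bound, using $|C_j|\geq |s_j|=b+c$), and that the chord cannot pull the diameter below $P/2$ (your explicit diametral pair $x,y$, for which both chord routes cost $P/2+b$). So the underlying idea---reduce to the cycle $O$ of perimeter $P$---is the same as the paper's, but you supply the missing justification for why the extra piece $[p_j^l,p_i^r]$ is irrelevant, which the paper leaves implicit.
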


\begin{proof}
As in the previous case, the two chains behave effectively like a cycle (see Figure \ref{fig:pathdiam-cases}(c)), in this case formed by $C_i$, $C_j$ and the segments $p^l_i p^l_j$ and $p^r_i p^r_j$.
The second equality follows from the fact that $|p^l_i p^l_j| = L_j-L_i$ and $|p^r_i p^r_j| = R_i-R_j$.
\end{proof}

Note that, while in the case of Figure \ref{fig:pathdiam-cases}(a) the diameter is achieved by a unique pair of points, that is not the case in situations (b) and (c), for which an infinite number of diametral pairs of points may exist (as in any cycle). Also, note that expressions for the diameter in Observations \ref{obs:nested_chains} and \ref{obs:overlapping_chains} are the same except for adding up either $R_j-R_i$ or $R_i-R_j$.
This difference only exists to differentiate between the two possible orders of the right endpoints of the two chains.

The algorithm for computing $\diam( \PP \cup s)$ in linear time starts by going along $\PP$ and computing all intersections with $s$ in the order of $\PP$. Then we apply a linear-time algorithm for Jordan sorting~\cite{FUNG199085} to obtain the intersections in the order along $s$, say, from left to right. Within the same running time we can compute the values $C_k$ and $s_k$.
Next, we sweep the endpoints of the chains along $s$ to compute, for each chain $C_k$, its furthest chain from the ones seen so far.
To that end, certain information is computed and stored:
\begin{enumerate}
\item The furthest chain from $C_k$ to the left, given by $\argmax_{0\leq i < k} \alpha_i$, where $\alpha_i=D_i+R_i$.
Similarly, we store the furthest chain to the right.
\item The furthest chain nested inside $C_k$. This is given by $\argmax_{j \in N_{k}} \beta_j$, where $\beta_j=|C_j|+L_j+R_j$ and $N_k$ is the set of indices of all chains nested inside $C_k$.
\item The furthest chain with one endpoint in $C_k$, and one outside: given by
$\argmax_{j \in O^r_{k}} \gamma_j$, where $\gamma_j=|C_j|+L_j-R_j$ and $O^r_k$ is the set of indices of all overlapping chains with their left endpoint inside $C_k$ and their right endpoint outside.
Similarly, we store those with their left endpoints outside and the right one inside of $s_k$.
\end{enumerate}

The computation of the information in (1) is straightforward when sweeping along $s$, say, from left to right.
We just maintain the largest value of $\alpha_i$ seen so far as we sweep.
The case of nested or overlapping chains, which is explained next, is more complicated because one needs the maximum restricted to those chains that are contained or overlap with $C_k$.

Suppose that $C_i$ starts to the left of $C_j$ (the other case is analogous). We use a data structure for range minimum queries~\cite{bfpss-lcatdag-05,bv-rstpds-93}.
This allows to preprocess an array $A$ in linear time in order to find the maximum value in any subarray $A[a,b]$ in $O(1)$ time.
In our context, we need two such data structures.
We use arrays $A_\text{n}$ and $A_\text{o}$ to store the maximum $\beta$ and $\gamma$ values defined above for the chains that are nested and overlapping, respectively.
Each array has one position for each endpoint of a chain, thus $2m$ in total.
The positions are as they appear sorted along $s$, from left to right.
Refer to \figurename~\ref{fig:pathdiam-arrays}.

For a chain $C_j$, the position corresponding to its left endpoint has a value equal to $\beta_j$ in the array $A_\text{n}$, and value $\gamma_j$ in array $A_\text{o}$.
The values corresponding to the right endpoints of the chains are not used, i.e., they have value $-\infty$, in both arrays.
At each array position, we also store pointers to the corresponding chains.

\begin{figure}[t]
\centering
\includegraphics{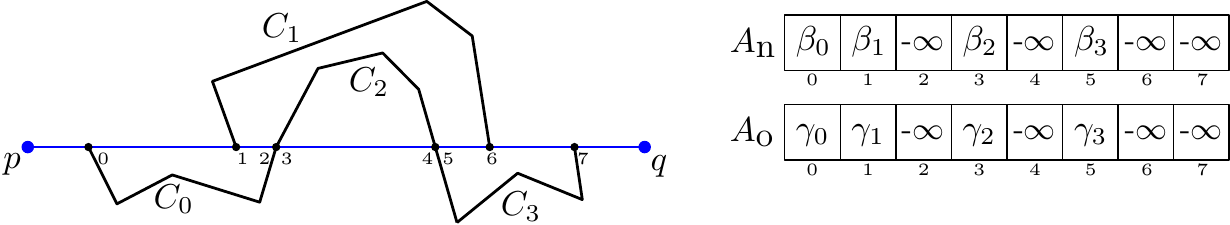}
\caption{Example of arrays with distances to chains that are nested ($A_\text{n}$) and overlapping ($A_o$).}
\label{fig:pathdiam-arrays}
\end{figure}

To find the nested or overlapping chain furthest from $C_i$ we would like to perform one maximum range query in $A_\text{n}$ and one in $A_\text{o}$, in both cases with a subarray corresponding to the interval between the endpoints of $C_i$.
The goal is to use these queries to obtain the furthest chain of each type: nested and overlapping.
However, there is an issue.
In the way $A_\text{n}$ and $A_\text{o}$ are defined, the result of a range query cannot distinguish between nested or overlapping chains, it necessarily searches in both sets (i.e., $N_i \cup O^r_i$).
Fortunately, the geometry of the problem guarantees that we can still use the result obtained, as we show next.
The following lemma shows that if the furthest face is associated to a $\beta_i$ value, then it must be nested, and similarly, if it is associated to a $\gamma_i$ value, it must be overlapping.

\begin{lemma}
\label{lem:unionfaces}
Let $C_k$ be a chain with distance to $C_i$ equal to
$$d^* =\max \{ \max_{j \in (N_i \cup O^r_i)} |C_i|-L_i-R_i+\beta_j , \max_{j \in (N_i \cup O^r_i)} |C_i|-L_i+R_i+ \gamma_j \}.$$ Then it holds that:
\begin{enumerate}
\item[(i)] if $d^* = |C_i|-L_i-R_i+\beta_k$, then $k\in N_i$;
\item[(ii)] if $d^* =  |C_i|-L_i+R_i+ \gamma_k$, then $k\in O^r_i$.
\end{enumerate}
\end{lemma}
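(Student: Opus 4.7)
The plan is a short contradiction argument based on a single algebraic identity combined with the geometric meaning of the quantities $R_i, R_j$.

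First I would record the identity
$$\bigl(|C_i|-L_i+R_i+\gamma_j\bigr) \;-\; \bigl(|C_i|-L_i-R_i+\beta_j\bigr) \;=\; 2(R_i - R_j),$$
which follows immediately from the definitions $\beta_j = |C_j|+L_j+R_j$ and $\gamma_j = |C_j|+L_j-R_j$. Thus for any fixed chain $C_j$, the $\gamma$-expression strictly exceeds the $\beta$-expression whenever $R_j < R_i$, and conversely the $\beta$-expression strictly exceeds the $\gamma$-expression whenever $R_j > R_i$.

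Next I would translate the sign of $R_j - R_i$ into the chain classification. If $j\in N_i$, i.e.\ $s_j\subset s_i$, then $p_j^r$ lies strictly to the left of $p_i^r$ along $s$, so $|p_j^r q| > |p_i^r q|$ and hence $R_j > R_i$. If instead $j\in O^r_i$, then by the definition of $O^r_i$ the point $p_j^r$ lies strictly to the right of $p_i^r$, whence $R_j < R_i$.

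With these two facts in hand, both implications are one-liners. For (i), assume $d^* = |C_i|-L_i-R_i+\beta_k$ and, for contradiction, that $k\in O^r_i$. Then $R_k < R_i$, and the identity above gives $|C_i|-L_i+R_i+\gamma_k > |C_i|-L_i-R_i+\beta_k = d^*$. But the left-hand side is itself one of the candidates in the maximum defining $d^*$, contradicting its maximality; therefore $k\in N_i$. Part (ii) is symmetric: assuming $d^* = |C_i|-L_i+R_i+\gamma_k$ with $k\in N_i$ forces $R_k > R_i$, so the $\beta$-expression at $k$ strictly exceeds the $\gamma$-expression at $k$, again contradicting the definition of $d^*$.

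There is no real obstacle; the whole argument boils down to that one identity and a sign analysis. The only borderline case is $R_k = R_i$ (so $p_k^r = p_i^r$), in which the two expressions coincide and $C_k$ can be consistently placed in either class, so the statement remains true.
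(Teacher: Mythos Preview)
Your argument is correct and is essentially identical to the paper's own proof: both proceed by contradiction, observing that if $k\in O^r_i$ then $R_k<R_i$, which makes the $\gamma_k$-expression strictly larger than the $\beta_k$-expression and contradicts the maximality of $d^*$ (and symmetrically for (ii)). Your explicit identity $2(R_i-R_j)$ and the remark on the degenerate case $R_k=R_i$ simply make the same reasoning a bit more transparent.
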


\begin{proof}
We prove (i); an analogous argument proves (ii). Suppose that $d^* = |C_i|-L_i-R_i+\beta_k$, but $k\in O^r_i$, i.e., $C_k$ is overlapping.
Further, to the right of $C_i$ we have $R_i > R_k$ and thus $R_k-R_i < 0 < R_i-R_k$. Hence, using this and the definitions of $d^*$ and $\beta_k$, we have
$$d^*= |C_i|-L_i-R_i+(|C_k|+L_k+R_k) < |C_i|-L_i+R_i+|C_k|+L_k-R_k = |C_i|-L_i+R_i + \gamma_k.$$
In other words, the value associated to $\gamma_k$ gives a larger distance than the one associated to $\beta_k$, contradicting the optimality of $d^*$.
\end{proof}

Therefore, when processing a chain $C_k$, we perform one maximum range query in $A_\text{n}$ and one in $A_\text{o}$, and keep the maximum of those two values.
Lemma~\ref{lem:unionfaces} guarantees that the associated chain is the furthest one that is either nested or overlapping.
Proceeding in an analogous way for the chains that are overlapping with one endpoint to the left of $C_k$, the furthest face from $C_k$ of any of the three types (disjoint, nested, overlapping) can be found in $O(1)$ time, and the maximum distance between two chains can thus be found in linear time.

\begin{theorem}
For every path $\PP$ with $n$ vertices and a shortcut $s$, it is possible to compute the diameter of $( \PP \cup s)$ in $\Theta(n)$ time.
\end{theorem}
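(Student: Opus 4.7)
The plan is to assemble the machinery developed above into a concrete linear-time procedure and then observe the matching lower bound. First I would handle preprocessing: a single walk along $\PP$ locates all intersection points with $s$ in the order they appear on $\PP$, in $O(n)$ time. Feeding this list to the Jordan sorting algorithm of~\cite{FUNG199085} reorders the intersections along $s$ from left to right, again in $O(n)$. Once the chains $C_0, \ldots, C_m$ are identified in both orderings, the quantities $|C_k|$, $L_k$, $R_k$, and the semiperimeter $D_k$ of the cycle $C_k \cup s_k$ can all be computed in total $O(n)$ time by a pair of linear scans along $\PP$. In the same pass I would build the arrays $A_\text{n}$ and $A_\text{o}$ of size $2m = O(n)$, equipped with a linear-time range-maximum data structure on each, as in~\cite{bfpss-lcatdag-05,bv-rstpds-93}.

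Next I would perform a left-to-right sweep of the chain endpoints along $s$, maintaining a running maximum of $\alpha_i = D_i + R_i$ over chains $C_i$ whose right endpoint has already been passed; when the current chain is $C_k$, this immediately yields the best disjoint-to-the-left pair distance $\alpha_{\max} - R_k - |s_k| + D_k$ via Observation~\ref{obs:disjoint_chains}. A symmetric right-to-left sweep handles the disjoint-to-the-right case. For each chain $C_k$ I would then issue one range-maximum query on $A_\text{n}$ and one on $A_\text{o}$, each over the sub-array spanned by the two endpoints of $C_k$ along $s$. By Lemma~\ref{lem:unionfaces} the winner of the $\beta$ query is nested inside $C_k$ and the winner of the $\gamma$ query is overlapping with $C_k$, so the closed-form expressions of Observations~\ref{obs:nested_chains} and~\ref{obs:overlapping_chains} apply directly. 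Each query costs $O(1)$, so the whole sweep runs in $O(n)$.

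Finally, I would take the maximum over (i) the best inter-chain distance of each of the three types above, and (ii) $\max_k D_k$, to also cover diametral pairs lying entirely inside a single cycle $C_k \cup s_k$. This maximum equals $\diam(\PP \cup s)$. The total running time is $O(n)$, and the lower bound $\Omega(n)$ is immediate since any algorithm must at least read the description of $\PP$, giving the claimed $\Theta(n)$ bound.

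The step I expect to require the most care is verifying that a range-max query between the two endpoints of $C_k$ really enumerates exactly the chains nested inside or overlapping with $C_k$, so that Lemma~\ref{lem:unionfaces} can be invoked cleanly; this amounts to combinatorial bookkeeping about the two possible overlap orientations (the set $O^r_k$ versus its mirror, which is why two separate queries on $A_\text{o}$ in the two sweep directions are needed) and about the degenerate loop chains $C_0$ and $C_m$. This is where any off-by-one mistakes in the array construction would most likely surface, but it does not affect the asymptotic complexity.
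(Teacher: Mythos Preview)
Your proposal is correct and follows essentially the same approach as the paper: the same preprocessing (walk along $\PP$, Jordan sorting, compute $|C_k|,L_k,R_k,D_k$), the same sweep with running maxima of $\alpha_i$ for disjoint chains, and the same range-maximum arrays $A_\text{n},A_\text{o}$ together with Lemma~\ref{lem:unionfaces} for nested and overlapping chains. You are slightly more explicit than the paper in two places---the single-cycle contribution $\max_k D_k$ and the $\Omega(n)$ lower bound---but neither changes the argument.
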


It is worth noting that the ideas used in this section do not extend to networks that are trees.
As soon as $\N$ is a tree, the insertion of $s$ creates chains that can have several ways to connect in $\N \cup s$, making it impossible to know, a priori, the expression of their distance as we did for paths in Observations~\ref{obs:disjoint_chains}--\ref{obs:overlapping_chains}.

\subsection{Optimal horizontal shortcuts}

The observations in Section \ref{sec:diameter} also give us a way to compute an optimal horizontal shortcut for a path considerably faster than using the general method in Section \ref{general_networks}. After a suitable rotation, this allows to find an optimal shortcut of any fixed orientation.

Assume as in Section \ref{sec:diameter} that shortcuts are horizontal and maximal, so they can be treated as horizontal lines. Now,
consider the vertices in $\PP$ sorted increasingly by $y$-coordinate, and let $y_a,y_b$, with $y_a < y_b$, be the $y$-coordinates of two consecutive vertices in that order. Observations \ref{obs:disjoint_chains}--\ref{obs:overlapping_chains} are stated in terms of chains, but they also apply to faces.
Indeed, they imply that the distance between any two faces $f_i$ and $f_j$ is a linear function $d_{ij}(y)$ for $y_a \leq y \leq y_b$. Thus, each face is associated with $k-1$ lines in 2D where $k$ is the total number of faces, leading to a set ${\cal L}$ of $\Theta(k^2)$ lines (note that $k=O(n)$). 

The optimal shortcut over all $y \in [y_a,y_b]$ is given by the minimum of the upper envelope of ${\cal L}$, which can be computed in $O(k^2 \log k)$ time~\cite{BergCKO08}. If this is done with each of the $n-1$ horizontal strips formed by consecutive vertices of $\N$, the optimal horizontal shortcut is obtained in total $O(n^3 \log n)$ time. 

 The preceding method can be improved if, instead of computing from scratch the upper envelope of ${\cal L}$ at each horizontal strip, we maintain the upper envelope between consecutive strips and only add or remove the lines that change when going from one strip to the next one. The changes between two consecutive strips are of three types:
\begin{enumerate}
\item[(i)] one of the two line segments bounding a face within the strip changes;
\item[(ii)] a face ends;
\item[(iii)] a new face appears.
\end{enumerate}
In the worst case, $n-1$ lines are removed from ${\cal L}$ and another $n-1$ lines are added to ${\cal L}$.
Maintaining the upper envelope of $N$ lines is equivalent to maintaining the convex hull of $N$ points in 2D, which can be done in amortized  $O(\log N)$ time per insert/delete operation with a data structure of size $O(N)$~\cite{bj-dpch-02}. Since $N = O(n^2)$, we obtain the following theorem.

\begin{theorem}
For every path $\PP$ with $n$ vertices, it is possible to find an optimal horizontal shortcut in $O(n^2 \log n)$ time, using $O(n^2)$ space.
\end{theorem}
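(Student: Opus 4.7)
The plan is to replace the per-strip recomputation of the upper envelope by a sweep that maintains it dynamically across the $n-1$ horizontal strips defined by consecutive vertex $y$-coordinates of $\PP$. Within a single strip, Observations~\ref{obs:disjoint_chains}--\ref{obs:overlapping_chains} already guarantee that each pairwise face distance is a linear function $d_{ij}(y)$, so the diameter of $\PP\cup s$ coincides with the upper envelope of the set $\mathcal{L}$ of $\Theta(n^2)$ such lines, and the best horizontal shortcut inside the strip is the minimum of this upper envelope restricted to $[y_a,y_b]$. Summing over all strips and returning the global minimum will yield the optimum; the task therefore reduces to amortizing the work across strips.

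The key combinatorial observation is that the transition between two consecutive strips modifies only $O(n)$ lines of $\mathcal{L}$. A vertex event of $\PP$ either swaps the two polygonal edges bounding a face at the sweep-line, merges two faces into one, or splits a face into two, as enumerated in cases (i)--(iii) of the excerpt. In all three situations only a constant number of faces are directly affected, but each such affected face participates in up to $O(n)$ of the functions $d_{ij}$; hence the event induces $O(n)$ insertions and $O(n)$ deletions in $\mathcal{L}$. For two faces that are \emph{not} touched by the event, the disjoint/nested/overlapping classification (and thus the coefficients of their $d_{ij}$-line) is determined by the linear order of their endpoints along the shortcut and does not change, so their lines remain untouched. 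Across the $n-1$ transitions this totals $O(n^2)$ updates.

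I would implement the maintenance using the dynamic convex-hull data structure of Brodal and Jacob~\cite{bj-dpch-02}, which by point/line duality supports insertion, deletion, and tangent and vertical queries on a set of up to $N=O(n^2)$ lines in amortized $O(\log N)=O(\log n)$ time with $O(n^2)$ space. To extract the per-strip optimum, note that the upper envelope of $\mathcal{L}$ is a convex piecewise linear function, so its minimum on $[y_a,y_b]$ is attained either at an endpoint of the strip or at the unique breakpoint where the slope changes sign; both candidates can be located with a constant number of $O(\log n)$-time tangent and vertical-ray queries on the maintained hull. Combining the $O(n^2)$ updates and $n-1$ queries gives the claimed $O(n^2\log n)$ time and $O(n^2)$ space. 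The main obstacle is the accounting in the second paragraph: one must rule out that a local face event could cascade into $\Omega(n^2)$ changes to $\mathcal{L}$, which is exactly what is prevented by the observation that only lines incident to the constant number of faces touched by the event acquire new coefficients, while the relative type of every other pair of faces is preserved.
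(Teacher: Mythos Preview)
Your proposal is correct and follows essentially the same approach as the paper: sweep the $n-1$ horizontal strips, observe that each vertex event changes only $O(n)$ of the $\Theta(n^2)$ linear functions $d_{ij}$, and maintain the upper envelope via the Brodal--Jacob dynamic convex hull structure. You even supply two details the paper leaves implicit---the explicit argument that pairs of faces not incident to the affected face keep their disjoint/nested/overlapping type (and hence their line coefficients), and the description of how to extract the per-strip minimum with $O(1)$ tangent/vertical queries.
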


\subsection{Optimal simple shortcuts}

In this section we consider optimal \emph{simple} shortcuts, i.e., we restrict the possible shortcuts to those whose interior does not intersect \N.
We show that an optimal simple shortcut can be computed much faster if it exists. Note that one must distinguish between an \emph{optimal simple shortcut} and a \emph{simple optimal shortcut}. The first is a shortcut that is optimal in the set of simple shortcuts; this is different of being optimal in the set of all shortcuts and, in addition, being simple.

Interestingly, it is known that optimal simple shortcuts may not exist, even for paths~\cite{YangBell98} (e.g., when the only optimal shortcut goes through a vertex, see Figure \ref{fig:optimal_simple}(a)). It is not clear, however, what the conditions for a network $\N$ to have an optimal shortcut are, even restricted to simple shortcuts. The following proposition is a first approach to this question.

\begin{figure}[ht]
 \begin{center}
\includegraphics{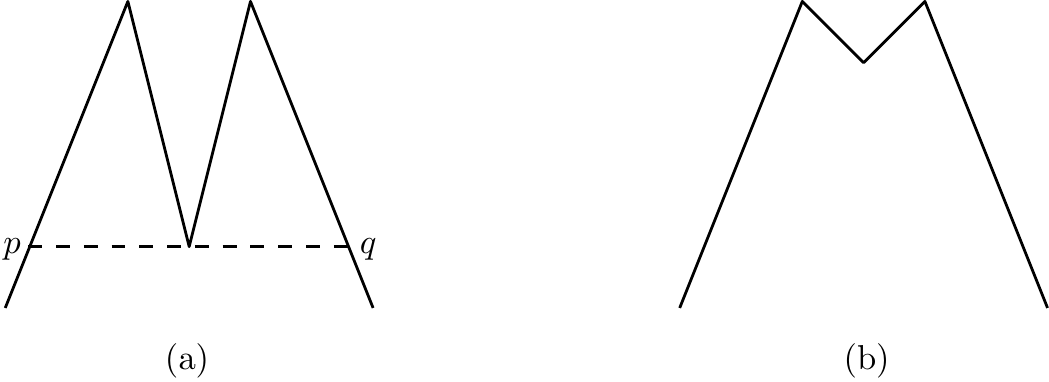}
\end{center}
\caption{(a) A network with no optimal simple shortcut: segment $pq$ can be approached as much as desired with simple shortcuts. (b) $\N$ admits an optimal simple shortcut and $\overline{\mathcal{N}}$ has non-convex faces.}\label{fig:optimal_simple}
\end{figure}

\begin{proposition}\label{prop:optimal_simple}
Let $\mathcal{N}$ be a network whose locus $\N$ admits a simple shortcut, and let $\overline{\mathcal{N}}$ be the network resulting from adding to $\mathcal{N}$ all edges of the convex hull of $V(\mathcal{N})$. If all faces of $\overline{\mathcal{N}}$ are convex, then $\N$ has an optimal simple shortcut.
\end{proposition}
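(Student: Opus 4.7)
The plan is to obtain an optimal simple shortcut as the limit of a minimizing sequence of simple shortcuts, using the convexity hypothesis to rule out the only possible obstruction to the limit itself being simple.

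First I would set $I:=\inf\{\diam(\N\cup s) : s \text{ is a simple shortcut of }\N\}$, which is strictly less than $\diam(\N)$ by the hypothesis that a simple shortcut exists, and pick a minimizing sequence $(s_n)$ with $\diam(\N\cup s_n)\to I$. Any shortcut $s$ satisfies $|s|\le 2\,\diam(\N\cup s) < 2\,\diam(\N)$, because its midpoint has eccentricity at least $|s|/2$, so the lengths $|s_n|$ are uniformly bounded. Since the interior of every simple shortcut is contained in a single face of $\mathcal{N}$ and $\mathcal{N}$ has only finitely many faces, a pigeonhole argument allows me to restrict to a subsequence with all $s_n$ lying in one face $F$ of $\mathcal{N}$. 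Bolzano--Weierstrass then yields a further subsequence $s_n\to s^\ast$, whose endpoints $p^\ast,q^\ast$ lie on $\partial F\subset\mathcal{N}$. Within a fixed face the combinatorial structure of shortest paths through $s_n$ is eventually stable, so $\diam(\N\cup s_n)$ is continuous in $s_n$ and the limit inherits $\diam(\N\cup s^\ast)\le I$; once $s^\ast$ is known to be simple, minimality of $I$ forces equality, proving the claim.

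The main obstacle is showing that $s^\ast$ is itself simple. The interior of $s^\ast$ can intersect $\mathcal{N}$ only in one of three ways: (i) a transverse crossing of an edge $e$, (ii) an overlap along an edge, or (iii) passing through a vertex of $\mathcal{N}$. Case (i) cannot occur because for large $n$ the $s_n$ would inherit the transverse crossing, contradicting their simplicity; case (ii) forces $s^\ast$ to lie along an edge of $\mathcal{N}$, so $s^\ast$ would not be a shortcut at all, contradicting $\diam(\N\cup s^\ast)<\diam(\N)$. The only surviving possibility is (iii), which is exactly the failure mode illustrated by Figure~\ref{fig:optimal_simple}(a).

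To exclude case (iii), I would invoke the hypothesis that every face of $\overline{\mathcal{N}}$ is convex. The face $F$ is subdivided by the added convex-hull edges into convex cells, and if $s^\ast$ passed through a vertex $v$ of $\mathcal{N}$ in its interior then the two rays of $s^\ast$ emanating from $v$ would enter two convex cells of $\overline{\mathcal{N}}$ adjacent at $v$ on opposite sides. Convexity of each such cell bounds its interior angle at $v$ by $\pi$, which I would use to show that a sufficiently small perpendicular displacement of $s^\ast$ keeps the whole segment inside $F$ while avoiding $v$, the endpoints being slid along the same incident edges of $\mathcal{N}$. Continuity of the diameter then produces a simple shortcut with $\diam(\N\cup\cdot)$ arbitrarily close to $\diam(\N\cup s^\ast)\le I$, and a final compactness argument on this avoiding family promotes ``arbitrarily close'' to ``equal'', yielding a simple shortcut that attains $I$. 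The convexity hypothesis is used precisely here: it is what guarantees that the perpendicular sliding remains in $F$, whereas in Figure~\ref{fig:optimal_simple}(a) the corresponding displacement escapes $F$ and the infimum is not attained.
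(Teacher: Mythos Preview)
Your overall strategy---take a minimizing sequence, use pigeonhole and Bolzano--Weierstrass to extract a limit segment $s^\ast$, then argue $s^\ast$ is simple and attains the infimum---is the same as the paper's. The continuity statement you invoke is exactly what the paper establishes with its explicit $\varepsilon$-argument, so that part is fine.

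The substantive difference is in how the convexity hypothesis is used, and this is where your argument has a gap. You localize the sequence to a face $F$ of $\mathcal{N}$ and then try to exclude case~(iii) (the limit passing through a vertex $v$) by a perpendicular displacement followed by ``a final compactness argument on this avoiding family''. That last step is circular: the displaced simple shortcuts form a new minimizing sequence whose limit may perfectly well be $s^\ast$ again, so you are back where you started. Moreover, a perpendicular displacement of $s^\ast$ does not in general keep the endpoints on the same edges of $\mathcal{N}$, so the construction of the ``avoiding family'' is not well specified.

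The paper sidesteps this entirely by localizing one level finer: it observes that every simple shortcut of $\mathcal{N}$ actually lies in the closure of a single face $\mathcal{F}$ of $\overline{\mathcal{N}}$ (its interior cannot meet a convex-hull edge, since those lie on the boundary of the hull), and pigeonholes into such an $\mathcal{F}$. Because $\mathcal{F}$ is convex by hypothesis, any segment in $\overline{\mathcal{F}}$ with endpoints on $\partial\mathcal{F}$ either has its interior in the open face (hence is simple for $\mathcal{N}$) or lies along a boundary edge of $\mathcal{F}$. The latter case splits into: an edge of $\mathcal{N}$, which is ruled out exactly as in your case~(ii); or a convex-hull edge not in $\mathcal{N}$, of which there are finitely many, so the best among those is a candidate optimal simple shortcut. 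No displacement argument is needed. If you replace your pigeonhole step ``face of $\mathcal{N}$'' by ``face of $\overline{\mathcal{N}}$'', your case~(iii) disappears and the proof goes through.
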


\begin{proof}
Suppose on the contrary that there is no optimal simple shortcut for $\N$. Then, there is a sequence $\{s_n\}$ of simple shortcuts such that $d_{n+1}<d_{n}$ where $d_n={\rm diam}(\N \cup s_{n})$. Let $s_n=p_nq_n$.
We may assume, without loss of generality, that $\{s_n\}$ is contained in a face $\mathcal{F}$ of $\overline{\mathcal{N}}$ as there is a finite number of faces, at least one of which contains a subsequence of $\{s_n\}$. The same argument on the number of edges of a face lets us assume that for every $n$, points $p_n$ and $q_n$ are, respectively, on edge $e$ and edge $e'$ of $\N$.

Since $e$ is a compact set, sequence $\{p_n\}$ has a convergent subsequence $\{p_{n'}\}$. The corresponding sequence $\{q_{n'}\}$ is not necessarily convergent, but edge $e'$ is also a compact set which implies that $\{q_{n'}\}$ contains a convergent subsequence $\{q_m\}$. Let $\{s_m\}=\{p_mq_m\}$, $\lim p_m=p$, $\lim q_m=q$, and $s=pq$; we will write, with some abuse of notation, $\lim s_m=s$ . Next, we prove that $d_s={\rm diam}(\N \cup s)<d_n$ for every $n$.

Fixed $n$, consider two segments $s_{m_0},s_{m_1} \subset \{s_m\}$ with $n<m_0<m_1$, and take $\varepsilon>0$ such that $d_{m_0}-d_{m_1}>\varepsilon$; such segments do exist as $\lim s_m=s$. Let $\mathcal{R}$ be the set of points in face $\mathcal{F}$ whose distance to $p$ or $q$ is smaller than $\varepsilon/8$. Any segment $s_{m'}$ with endpoints $p'\in e\cap \mathcal{R}$ and $q'\in e'\cap\mathcal{R}$ verifies that $|d_s-d_{m'}|<\varepsilon/2$. Indeed, $$|d_s-d_{m'}|<|p-p'|+|q-q'|+||pq|-|p'q'||<\varepsilon/8+\varepsilon/8+2\varepsilon/8=\varepsilon/2.$$
Since $\lim s_m=s$, there exists $m_2$ such that $s_m \in \mathcal{R}$ for all $m\geq m_2$. Let $m_3={\rm max}\{m_1,m_2\}$.
 If $d_s<d_{m_3}$ then $d_s<d_{m_3}+\varepsilon <d_{m_0}<d_n$. Otherwise, $$d_s<d_{m_3}+\varepsilon/2<d_{m_0}-\varepsilon/2<d_{m_0}<d_n.$$

Thus, for every sequence $\{s_n\}$ of simple shortcuts such that $d_{n+1}<d_{n}$ there is a shortcut $s$ satisfying that $d_s<d_{n}$ for all $n$. Further, $s$ is simple (all faces of $\overline{\mathcal{N}}$ are convex) and there is no optimal simple shortcut. This implies that $s$ must be an edge of ${\mathcal{N}}$ which contradicts the fact that, by definition, edges are not shortcuts. Note that there can be sequences for which the corresponding segment $s$ is an edge of $CH(V(\mathcal{N})\setminus E(\mathcal{N})$ (here $CH(V(\mathcal{N})$ denotes the convex hull of $V(\mathcal{N})$), but there is a finite number of such edges and so one would obtain an optimal simple shortcut among all those segments $s$.
\end{proof}

Figure \ref{fig:optimal_simple}(b) shows that the converse of Proposition \ref{prop:optimal_simple} is not true.

We now turn our attention to the computation of an optimal simple shortcut, if it exists.

Let $s=pq$ be a simple shortcut for a path $\PP$ with endpoints $u,v$. Suppose that point $p$ is closer to $u$ than $q$ along $\PP$; let $x=d(u,p)$ and $y=d(v,q)$. There is only one bounded face in $\PP\cup s$ whose boundary is a cycle $C(p,q)$. Let $\overline{p}$ and $\overline{q}$ be the farthest points from, respectively, $p$ and $q$ on $C(p,q)$, and let $z=(d_{\PP}(p,q)-|pq|)/2$. Note that $d(\overline{p},\overline{q})=|pq|$ and $z=d(p,\overline{q})=d(\overline{p},q)$.
See \figurename~\ref{fig-path1b}.

\begin{figure}[t]
\begin{center}
\includegraphics{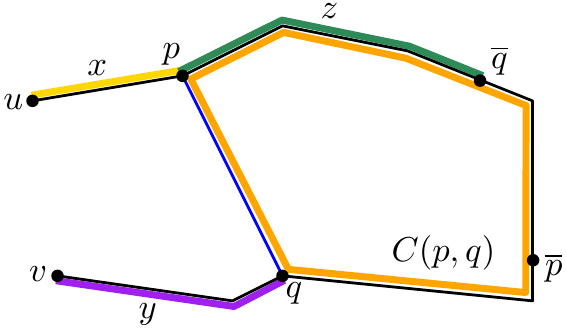}
\end{center}
\caption{Inserting a simple shortcut $pq$.}\label{fig-path1b}
\end{figure}

There are three candidates for diametral path in $\PP\cup s$ (see \cite{DBLP:conf/swat/CarufelMS16}):
\begin{enumerate}
\item The path from $u$ to $v$ via $s$ is diametral if and only if $z={\rm min}\{x,y,z\}$,
 \item the path from $u$ to $\overline{p}$ via $s$ is diametral if and only if $y={\rm min}\{x,y,z\}$,
\item the path from $v$ to $\overline{q}$ via $s$ is diametral if and only if $x={\rm min}\{x,y,z\}$.
\end{enumerate}

Thus, ${\rm diam}(\PP\cup s)\in \{x+y+|pq|, x+z+|pq|, y+z+|pq|\}$.

For the highway model, it was proved in~\cite{DBLP:conf/swat/CarufelMS16} that $\PP$ has an optimal shortcut satisfying $x=y$, which allows to compute it in linear time.
In the planar model the situation is more complicated but, in a similar fashion, we can prove the following lemma, which will lead to Theorem~\ref{th:oss}.

\begin{lemma} \label{lem:oss}
Let $pq$ be an optimal simple shortcut for $\PP$. The following statements hold.
\begin{enumerate}
\item If neither $p$ nor $q$ are vertices of $\PP$ then $x=y=z$.
\item If $p$ or $q$ are vertices of $\PP$ then the two smallest values among $x,y,z$ are equal.
\end{enumerate}
\end{lemma}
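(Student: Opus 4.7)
The plan is to reduce the analysis to a first-order perturbation argument on a single convenient scalar. Concatenating the sub-paths $u\to p$, $p\to q$ along $\PP$, and $q\to v$ gives the identity
\[
L := d_{\PP}(u,v) = x + y + 2z + |pq|,
\]
which, substituted into the case analysis preceding the lemma, rewrites the three candidate diametral lengths uniformly as
\[
\diam(\PP\cup pq)=L-z-\min\{x,y,z\}.
\]
Hence minimizing the diameter is equivalent to maximizing $\Phi(p,q):=z+\min\{x,y,z\}$. I parameterize shortcuts by the arc-length positions $\alpha_p=x$ and $\alpha_q=L-y$ of $p,q$ along $\PP$; when an endpoint is interior to an edge, $|pq|$ is smooth with $\partial|pq|/\partial\alpha_p=-c_p$, where $c_p$ is the cosine of the angle at $p$ between the path tangent in the direction of increasing arc-length and $\widehat{pq}$ (analogously for $c_q$ at $q$).

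\noindent\textbf{Handling Part 2.} I first address Part 2. Suppose $p$ or $q$ is a vertex and, for contradiction, that exactly one of $x,y,z$ attains the minimum. Then $\Phi$ is smooth at $pq$. For instance, if $z$ is the strict minimum then $\Phi=2z$ with $\partial\Phi/\partial\alpha_p=c_p-1\le 0$ and $\partial\Phi/\partial\alpha_q=1+c_q\ge 0$. Choosing an admissible perturbation---respecting that a vertex endpoint may only move onto its incident edges, and halting before the strict minimum catches up with the next smallest of $x,y,z$---gives $\Delta\Phi>0$, contradicting optimality. The only way to block every admissible direction is a degenerate tangent alignment in which both incident edges at the vertex endpoint are aligned with $\widehat{pq}$, but this would force the shortcut to overlap with a straight portion of $\PP$ and contradict $pq$ being a shortcut that strictly improves the diameter. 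The sub-cases where $x$ or $y$ is the strict minimum are symmetric, so at least two of $x,y,z$ must equal the minimum.

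\noindent\textbf{Handling Part 1.} For Part 1 both endpoints are interior, so $\alpha_p$ and $\alpha_q$ vary freely. By the argument of Part~2, at least two of $x,y,z$ tie as the minimum; it remains to rule out that exactly two tie. By the $u\leftrightarrow v$ symmetry, only two sub-cases need analysis: $x=y<z$ and $x=z<y$. In the first, the tie locus in parameter space is the line $\alpha_p+\alpha_q=L$, and moving along it by $(\Delta\alpha_p,\Delta\alpha_q)=(t,-t)$ a direct computation yields
\[
\Delta\Phi=\tfrac{1}{2}(c_p-c_q)\,t,
\]
so when $c_p\neq c_q$ the sign of $t$ can be chosen to strictly increase $\Phi$, contradicting optimality; an analogous computation along the (non-linear) tie locus $\{x=z\}$ settles the second sub-case. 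The principal technical obstacle is the remaining degenerate configuration $c_p=c_q$ (or its analogue), where the first-order variation of $\Phi$ vanishes along the tie locus. It is resolved by a sliding argument: along the tie locus $|pq|$ has a second-order local minimum, so $\Phi$ is constant to first order, allowing the shortcut to be slid along this locus without decreasing $\Phi$ until either a vertex of $\PP$ is encountered (reducing to Part~2) or $z$ decreases to meet $x=y$, producing an optimum with $x=y=z$. Simplicity of the shortcut is preserved throughout by a standard continuity argument, since the simplicity condition is open.
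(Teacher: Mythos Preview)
Your reformulation $\diam(\PP\cup pq)=L-z-\min\{x,y,z\}$ is correct and clean, and the underlying idea---perturb an endpoint to strictly decrease the diameter whenever the claimed equalities fail---is exactly the paper's approach. The paper simply runs the same perturbation argument directly on the three diametral-path candidates: if $x$ (resp.\ $y$) is the strict minimum or strict maximum it moves $p$ (resp.\ $q$) appropriately, and if $z$ is the strict minimum it moves both endpoints outward; Part~2 is the same with only the non-vertex endpoint free. So at the level of the main idea your proof and the paper's coincide, and your $\Phi$-function only repackages the same case analysis.

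Where your write-up does more than the paper---and where it runs into trouble---is the handling of the degenerate first-order cases in Part~1. Your sliding argument for $c_p=c_q$ does \emph{not} establish the lemma as stated: you slide along the tie locus to reach either a vertex or a configuration with $x=y=z$, but that only exhibits \emph{another} shortcut with the same (or better) $\Phi$; it does not show that the \emph{given} optimal shortcut $pq$ satisfies $x=y=z$. Moreover, along the tie locus $\{x=y\}$ you correctly have $\Phi=L/2-|pq|/2$, so if $|pq|$ genuinely has a strict local minimum there (which is precisely your hypothesis), then $\Phi$ has a strict local \emph{maximum} and sliding strictly \emph{decreases} $\Phi$---you cannot slide ``without decreasing $\Phi$'' as claimed. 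The assertion that $|pq|$ must have a second-order local minimum is also unsupported. The paper's proof simply does not enter this level of detail; if you want to go beyond it, the degenerate cases need a different argument (or an explicit genericity assumption), not the sliding you propose.
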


\begin{proof}
First note that increasing (resp., decreasing) the value of either $x$ or $y$ leads to a decrease (increase) in the value of $z$.

Suppose that neither $p$ nor $q$ are vertices of $\PP$ and assume on the contrary that $x$ is smaller (respectively, greater) than $y$ and $z$. Then point $p$ can be slightly moved away from $u$ (respectively, $q$ away from $v$) in order to obtain a shortcut $s'$ such that ${\rm diam}(\PP\cup s')<{\rm diam}(\PP\cup pq)$. Thus, shortcut $pq$ would not be optimal, which is a contradiction. The same argument applies for $y$. Finally, if the smallest value is $z$ then we can slightly move $p$ closer to $u$ and $q$ closer to $v$ while decreasing the diameter of the augmented network, again a contradiction.

Suppose now that $p$ is a vertex of $\PP$, i.e., $q$ is the only possible point to be moved in order to argue as above and reach a contradiction. If $x<y<z$ we can decrease $y$ by moving point $q$ closer to $v$ obtaining a ``better'' shortcut, a contradiction. If $x<z<y$ it suffices to increase $y$ in order to decrease $z$ to reach a contradiction. The remaining cases are analogous.
\end{proof}

\begin{theorem} \label{th:oss}
 It is possible to decide whether a path $\PP$ with $n$ vertices has an optimal simple shortcut and compute one (in case of existence) in $O(n^2)$ time.
\end{theorem}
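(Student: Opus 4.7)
The plan is to use Lemma~\ref{lem:oss} to reduce the continuous search for an optimal simple shortcut to the evaluation of $O(n^2)$ discrete candidates, each in $O(1)$ time.

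For Case 1 of the lemma (neither $p$ nor $q$ is a vertex), I iterate over all ordered pairs of edges $(e_i,e_j)$ of $\PP$ and parameterize $p \in e_i$ and $q \in e_j$ by $t_p,t_q \in [0,1]$. Since $\PP$ is a path, both $x=d_\PP(u,p)$ and $y=d_\PP(v,q)$ are linear in the corresponding parameters, and $d_\PP(p,q)$ is linear in $(t_p,t_q)$. The condition $x=y=z$ therefore reduces to a linear equation ($x=y$) together with $|pq| = d_\PP(p,q)-2x$ (equivalent to $x=z$); squaring the latter gives a polynomial of degree two in $(t_p,t_q)$, so the system has $O(1)$ real solutions per edge pair, for $O(n^2)$ candidates in total. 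For Case 2, I iterate over all pairs (vertex $v$, edge $e$) and fix one endpoint at $v$; with only one free parameter, the condition that the two smallest of $x,y,z$ coincide yields $O(1)$ candidates per pair, again $O(n^2)$ in total.

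For each candidate $(p,q)$ I check in $O(1)$ that $t_p,t_q\in [0,1]$, and compute ${\rm diam}(\PP \cup pq)$. Since $pq$ is simple, the insertion creates a single bounded face, so the diameter is the maximum of $\{x+y+|pq|,\,x+z+|pq|,\,y+z+|pq|\}$ with the active branch determined by the smallest of $x,y,z$, as stated before Lemma~\ref{lem:oss}. I then verify that this value is strictly smaller than ${\rm diam}(\PP)$ (so that $pq$ is genuinely a shortcut) and keep the overall minimizer. If no candidate survives, I report that no optimal simple shortcut exists.

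The main obstacle is the simplicity test for each candidate: a direct check against all edges of $\PP$ costs $\Theta(n)$ per candidate and would inflate the total to $O(n^3)$. To stay within $O(n^2)$ I would precompute, in $O(n^2)$ total time, a visibility table that records for each ordered pair of edges $(e_i,e_j)$ the sub-region of $e_i \times e_j$ whose segments avoid $\PP$; this is obtained, for each $e_i$, by a single rotational sweep of the remaining edges around $e_i$ in $O(n)$ time. With this table, checking whether a candidate $pq$ is simple reduces to a constant-time lookup, which together with the $O(1)$ diameter evaluation yields the claimed $O(n^2)$ bound.
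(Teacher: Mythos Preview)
Your reduction to $O(n^2)$ candidates via Lemma~\ref{lem:oss} is sound as far as it goes, but the decision procedure is incomplete. Lemma~\ref{lem:oss} only tells you that \emph{if} an optimal simple shortcut exists, it is one of your candidates. It does \emph{not} tell you that the best simple candidate is optimal when no optimum exists. Concretely, the infimum of ${\rm diam}(\PP\cup s)$ over simple shortcuts may be approached by a sequence of simple segments converging to a non-simple limit---typically a segment that touches a third point of $\PP$ (``pivoting'' on a vertex $w$). In that situation your algorithm may still find some simple candidate $s^*$ satisfying $x=y=z$, and it will happily output $s^*$ as optimal, even though there are simple shortcuts with strictly smaller diameter (just not satisfying the equalities, because the perturbation argument in Lemma~\ref{lem:oss} drives them toward the non-simple limit rather than toward a balanced configuration). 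The paper handles this by also enumerating the limit segments $\mathcal{L}$ (those with exactly three intersection points with $\PP$, including the pivoting ones) and declaring that an optimal simple shortcut exists iff the minimum over $\mathcal{S}\cup\mathcal{L}$ is attained inside $\mathcal{S}$. You need an analogous comparison; ``no candidate survives'' is not the right test.

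A second, smaller gap is the simplicity check. Your proposed $O(n^2)$-time preprocessing---``for each $e_i$, a single rotational sweep of the remaining edges around $e_i$''---is not a standard primitive: a rotational sweep is defined around a point, not around a segment, and the region of $e_i\times e_j$ whose chords avoid $\PP$ can have $\Omega(n)$ boundary arcs (one per vertex of $\PP$), so it is unclear what constant-size object you would store per pair to support an $O(1)$ lookup. You either need to justify this structure carefully or find another route; note that the paper, too, is terse here, but at least its candidate set also includes the boundary (three-contact) segments, which lets the classification be phrased as membership in $\mathcal{S}$ versus $\mathcal{L}$ rather than as an arbitrary stabbing query. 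As a side remark, your Case~1 produces $O(n^2)$ candidates where $O(n)$ suffice: once you impose $x=y$, the edge containing $q$ is determined (up to $O(1)$ choices) by the edge containing $p$, because the two points are equidistant from the respective ends of $\PP$; the paper exploits this by re-subdividing the two halves of $\PP$ into matched edges.
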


\begin{proof}
We first determine all candidate segments $pq$ for an  optimal simple shortcut and also other relevant segments to decide the existence of such a shortcut. Two cases are distinguished depending on whether $p$ and $q$ are vertices.

\emph{Case 1.} $p,q\notin V(\PP)$: we add $O(n)$ extra vertices on each half of $\PP$, such that the $i$-th edges from each endpoint have the same length. Let $\mathcal{P}_{\ell}^1$ and $\mathcal{P}_{\ell}^2$ be the two resulting sub-paths where every edge $e_1\in \mathcal{P}_{\ell}^1$ has an associated edge $e_2\in\mathcal{P}_{\ell}^2$ of the same length, and vice-versa.

This partition of $\PP$ allows us to compute the points $p\in e_1 \in \mathcal{P}_{\ell}^1$ and $q \in e_2 \in\mathcal{P}_{\ell}^2$ where $x=y$ (and $|e_1|=|e_2|$).
Among these points and according to Lemma \ref{lem:oss}, we want those satisfying that $z=x$ $(=y)$.
Now, assume we have $p \in e_1$ and $q \in e_2$ such that $x=y=z$. Consider advancing a distance $\lambda$ along $e_1$ and $e_2$, from $p$ and $q$ to $p'$ and $q'$.
Since $d_{{P}_{\ell}}(p,q)=|\PP|-2x$ and $z=(d_{{P}_{\ell}}(p,q)-|pq|)/2$, we only need to compute the variation of $|pq|$ with respect to the variation $\lambda$ of $x$ (and $y$). \figurename~\ref{fig-path25}(a) shows the construction where $\alpha$ denotes the angle formed by the prolongations of $e_1$ and $e_2$, and $a$ and $b$ are, respectively, the distances from $p$ and $q$ to the intersection point of those prolongations. By the law of cosines we have
$$|p'q'|=\sqrt{(a\mp \lambda)^2+(b\mp \lambda)^2 - 2(a\mp\lambda)(b\mp\lambda)\cos \alpha}.$$
Note that the variation of $x$ might be in the direction of the intersection point $o$ of the prolongations of $e_1$ and $e_2$ or in the opposite direction and, in general, it does not lead to a parallel segment to $pq$ (see \figurename~\ref{fig-path25}(b)); in fact, segments $pq$ and $p'q'$ are parallel only if the triangle formed by $p, q$, and the intersection point $o$ is isosceles.

Therefore, by solving $O(n)$ quadratic equations, we obtain $O(n)$ candidate segments $pq$ with $p,q\notin V(\PP)$.

\begin{figure}[t]
\begin{center}
\begin{tabular}{ccccc}
\includegraphics{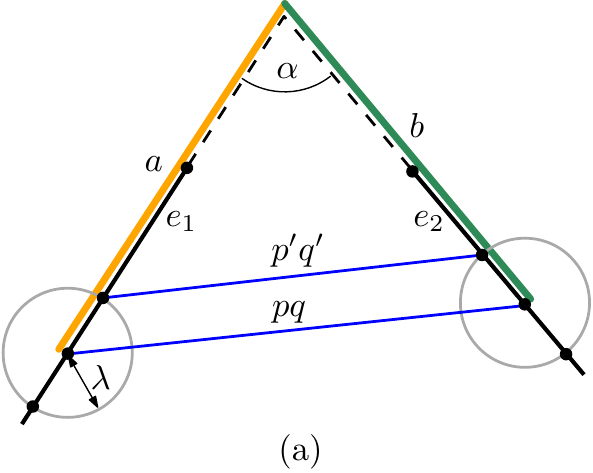}
& & & &
\includegraphics{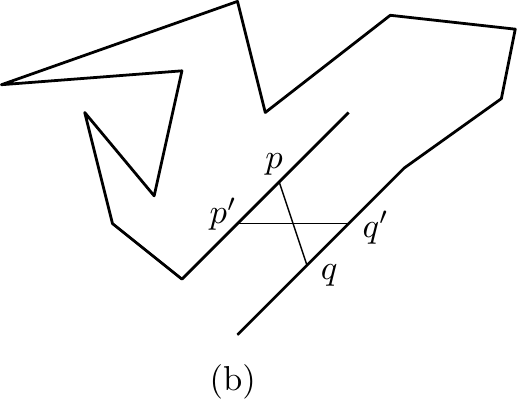}
\end{tabular}
\end{center}
\caption{(a) Updating $z$ after increasing $x$ and $y$ by $\lambda$. (b) Segments $pq$ and $p'q'$ are non-parallel.}\label{fig-path25}
\end{figure}

\emph{Case 2.}  $p$ or $q$ are vertices of $\PP$: according to Lemma \ref{lem:oss} our candidate segments must satisfy that the two smallest values among $x,y,z$ are equal. Clearly, there are $O(n^2)$ segments to consider with both endpoints being vertices; it remains to obtain those candidate segments with exactly one endpoint in $V(\PP)$.

For each vertex of $\mathcal{P}_{\ell}^1$ (i.e., $x$ is a fixed value), we have computed in case (1) the $O(n)$ candidate segments verifying that $x=y$ and so we must determine those for which $x=z$
(analogous for each vertex in $\mathcal{P}_{\ell}^2$ and $y=z$). This can be done as in case (1) where $\lambda$ is now the variation of $y$ since $x$ is fixed; the formula is $$|pq'|=\sqrt{a^2+(b\mp \lambda)^2 - 2a(b\mp\lambda)\cos \alpha}.$$ Note that $p$ is a fixed vertex of $\mathcal{P}_{\ell}^1$ and the corresponding $q$ and $q'$ are points on an edge of $\mathcal{P}_{\ell}^2$. Thus, there is a total of $O(n^2)$ candidate segments.

Besides the segments obtained in cases (1) and (2), we must consider other segments that are pivoting on a vertex $w$ of $\PP$ (see Figure \ref{fig-path4}) and such that the two smallest values among $x,y,z$ are equal. Note that in the preceding cases we have not computed those whose endpoints are not vertices and $z=x\neq y$ or $z=y\neq x$. This can be done applying twice the argument of case (2): to the left and to the right of the line shown in Figure \ref{fig-path4} passing through vertex $w$. The desired value $z$ is the sum of the two obtained values.

\begin{figure}[ht]
 \begin{center}
\includegraphics[width=0.4\textwidth]{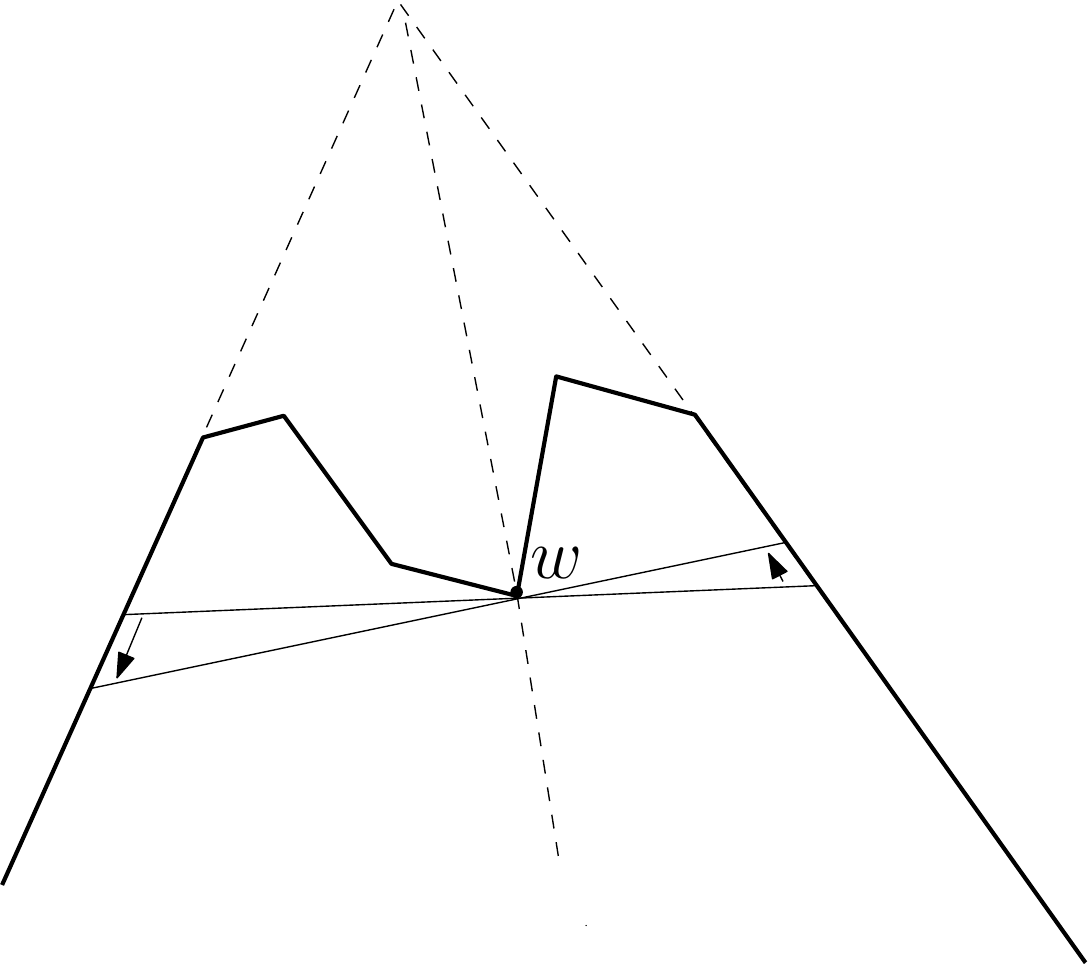}
\end{center}
\caption{Shortcut that is pivoting on vertex $w$.}\label{fig-path4}
\end{figure}

Finally, we classify all our segments, $O(n^2)$ in total,
into three sets: $\mathcal{S}$ of simple shortcuts, $\mathcal{L}$ of limit cases (the segment intersects $\PP$ on three points), and shortcuts that intersect $\PP$ on more than three points. There exists an optimal simple shortcut if and only if the minimum value of ${\rm diam}(\PP\cup pq)$ over $pq\in \mathcal{S}\cup\mathcal{L}$ is attained by a segment in $\mathcal{S}$.
\end{proof}

\section{Conclusions}\label{sec:conclusions}

In this paper we have presented the first results on the computation of optimal shortcuts for general networks in the planar model.
This can be seen as a particular variant of the road network design problem, where the problem is abstracted to its most fundamental geometric version.
Clearly, even in this restricted setting, the problem continues to be difficult and challenging.
We have shown that an optimal shortcut can be computed in polynomial time, and given a discretization of the problem that results in an approximation of the original continuous version.
Even though the discretization obtained is too large to be of practical use, it is interesting from a theoretical point of view, and hopefully will be useful to obtain smaller discretizations in the future.

We have also presented new results for paths, including how to quickly compute the diameter after inserting a shortcut, the computation of an optimal shortcut of fixed orientation, and of an optimal simple shortcut.
These are important first steps on a relevant and difficult problem, which leave many intriguing questions open.
The existence of a small discrete set of segments to approximate an optimal shortcut, or a fast algorithm to find an optimal shortcut for paths (any orientation), are some examples.

Finally, the questions studied in this paper but for optimal sets of $k>1$ shortcuts pose challenging open problems.

\paragraph{Acknowledgments.}

D.G. and R.S. were supported by project MTM2015-63791-R.
R.~S.\ was also supported by Gen.\ Cat.\ 2017SGR1640 and MINECO through the Ram{\'o}n y Cajal program.
A.M. was supported by project BFU2016-74975-P.

\bibliography{refs}

\end{document}